\newif\ifemi
\newcommand{\ifempty}[3]{%
  \ifthenelse{\isempty{#1}}{#2}{#3}%
}
\newcommand{\mkfun}[4][\colorFun]{
  \newcommand{#2}[1][#4]{
    {#1\textsf{#3}}
    \ifempty{##1}{}{
      \big({##1}\big)}
  }
}
\newcommand{\mkuop}[4][\colorFun]{
  \newcommand{#2}[1][#4]{
    {#1\textsf{#3}}
    \ifempty{##1}{}{
      \, {##1}}
  }
}
\newcommand{\hidden}[1]{}
\newcommand{\cf}[2]{
  \fontsize{#1}{#1}{\selectfont{#2}}
}
\newcommand{\emi}[2]{
  \marginpar{\fcolorbox{red}{shadecolor}{\cf{#1}{{#2}}}}
}
\newcommand{\emic}[2]{\par
  \fcolorbox{red}{shadecolor}{\parbox{\linewidth}{ 
      \color{gray}
      \begin{description}
      \item[{\color{blue} #2}]{\sf #1}
      \end{description}}}
}
\newcommand{\emi}[2]{}
\newcommand{\emic}[2]{}{}
\newcommand{\sst}{\;\big|\;}
\newcommand{\qst}{\;\colon\;} %such that
\newcommand{\conf}[1]{\ensuremath{\langle {#1} \rangle}}
\newcommand{\langof}[1]{\mathbb{L}(#1)}
\newcommand{\qqand}{\qquad\text{and}\qquad}
\newcommand{\upd}[3]{{#1}[{#2} \mapsto {#3}]}
\newtheorem{definition}{Definition}{\bfseries}{\rmfamily}
\newtheorem{theorem}{Theorem}{\bfseries}{\rmfamily}
{\bfseries}{\rmfamily}
\newcommand{\quo}[1]{\lq\lq {#1}\rq\rq}
\def\finex{{\unskip\nobreak\hfil
\penalty50\hskip1em\null\nobreak\hfil$\diamond$
\parfillskip=0pt\finalhyphendemerits=0\endgraf}}
\definecolor{shadecolor}{rgb}{1,0.99,0.9}
\definecolor{bg}{rgb}{0.95,0.95,0.95}
\newcommand{\newgreen}{green!50!blue!100}
\newcommand{\fillcolor}{orange!5}
\def\colorPtp{\color{blue}}
\def\colorFun{\color{NavyBlue}}
\def\colorOp{\color{OliveGreen}}
\def\colorNode{\color{cyan}}
\def\colorR{\color{OliveGreen}}
\def\colorE{\color{orange}}
\def\colorMsg{\color{BrickRed}}
\newcommand{\msg}[1][m]{\mathsf{\colorMsg{#1}}}
\newcommand{\msgset}{\mathcal{M}}
\newcommand{\chset}{\mathcal{C}}
\newcommand{\lset}{\mathcal{L}}
\newcommand{\ptp}[1][A]{\ensuremath{\mathsf{\colorPtp{\capitalisewords{#1}}}}}
\newcommand{\p}{\ptp}
\newcommand{\q}{{\ptp[B]}}
\newcommand{\s}{{\ptp[S]}}
\newcommandx{\ggcommon}[3][1=\ptp,2={\aH},3={\aH'},usedefault=@]{f_{#1}}
\newcommandx{\opair}[2][1={\ae},2={\ae'},usedefault=@]{\conf{{#1},{#2}}}
\newcommandx{\hopair}[2][1={\aE},2={\aE'},usedefault=@]{\llparenthesis\, {#1},{#2}\, \rrparenthesis}
\newcommandx{\wf}[2][1={\aG},2={\aG'},usedefault=@]{wf({#1}, {#2})}
\newcommandx{\wb}[2][1={\aG},2={\aG'},usedefault=@]{wb({#1}, {#2})}
\newcommandx{\ws}[2][1={\aG},2={\aG'},usedefault=@]{ws({#1}, {#2})}
\newcommandx{\widx}[2][1={\aW},2={i},usedefault=@]{{#1}[{#2}]}
\newcommandx{\outop}[2][1=\gname,2={}]{{\colorOp{!}}^{{#1}{#2}}}
\newcommandx{\inop}[2][1=\gname,2={}]{{\colorOp{?}}^{{#1}{#2}}}
\newcommandx{\aout}[5][1={\p},2={\q},3={},4=\msg,5={},usedefault=@]{
  \achan[#1][#2] \outop[{#3}] {#4}{#5}
}
\newcommandx{\ain}[5][1={\p},2={\q},3={},4=\msg,5={},usedefault=@]{
  \achan[#1][#2] \inop[{#3}] {#4}{#5}
}
\newcommandx{\adep}[1][1={}]{
  \conf{ \aout[@][@][@][@][{#1}], \ain[@][@][@][@][{#1}]}
}
\newcommandx{\hproj}[2][1=\aH, 2=\ptp, usedefault=@]{
  \ifempty{#1}{}{{#1}}\ifempty{#2}{}{{^{\scriptscriptstyle @{#2}}}}
}
\newcommandx{\eproj}[2][1=\aE,2=\ptp, usedefault=@]{
  {{#1}}\ifempty{#2}{}{{^{\scriptscriptstyle @{#2}}}}
}
\newcommand{\apom}{r}
\newcommand{\alf}{\lambda}
\newcommand{\projpom}[2]{{#1}\!\!\downharpoonright_{#2}}
\newcommand{\eset}{\mathcal{E}}
\newcommand{\aR}[1][R]{{\colorR{#1}}}
\newcommand{\aConf}{s}
\newcommand{\alfof}[1]{\alf_{#1}}
\newcommand{\esetof}[1]{\eset_{#1}}
\newcommand{\leqof}[1]{\leq_{#1}}
\newcommandx{\detM}[1][1=\aCM,usedefault=@]{\Delta({#1})}
\newcommandx{\cm}[2][1=\ptp, 2=\aM]{{#2}_{#1}}
\newcommandx{\achan}[2][1=A,2=B,usedefault=@]{{\ptp[#1]\,\ptp[#2]}}
\newcommand{\ptpset}{\mathcal{\colorPtp{P}}}
\newcommand{\pset}{\ptpset}
\newcommand{\oact}{\outop[]}
\newcommand{\iact}{\inop[]}
\newcommand{\tset}{\to}
\newcommand{\csconf}[2]{\conf{\vec{#1} \ ; \ \vec{#2}}}
\newcommand{\TRANSS}[1]{{\xRightarrow{\raisebox{-.3ex}[0pt][0pt]{$\scriptstyle #1$} }}}
\newcommand{\trans}[2][{}]{\,\xrightarrow{#2}_{#1}\,}
\newcommandx{\acfsmout}[3][1=A,2=B,3=m,usedefault=@]{\achan[{\ptp[#1]}][{\ptp[#2]}] \oact {\msg[{#3}]}}
\newcommandx{\acfsmin}[3][1=A,2=B,3=m,usedefault=@]{\achan[{\ptp[#1]}][{\ptp[#2]}] \iact {\msg[{#3}]}}
\newcommandx{\fsaout}[2][1={\p},2={},usedefault=@]{
  \ptp[#1] \ \outop[]\ \msg[{#2}]
}
\newcommandx{\fsain}[2][1={\p},2={},usedefault=@]{
  \ptp[#1] \ \inop[]\ \msg[{#2}]
}
\newcommand{\linenumfontsize}{\@setfontsize{\linenumfontsize}{3pt}{3pt}}
\lstdefinelanguage{sys}{
	commentstyle=\color{Gray},
%	morecomment=[l]{..},
	morecomment=[s]{[}{]},
   % keywordsprefix={receive},
	keywords=[0]{system,of,do,end},	keywordstyle=\color{orange}\bfseries,
%	morekeywords=[1]{\*,\+,\!,\?,\|,\=},	keywordstyle=[1]\color{blue}\bfseries,
}
\newcommand{\aG}{\mathsf{G}}
\newcommand{\gseqop}{{\colorOp ;}\,}
\newcommand{\gparop}{{\colorOp \ |\ }}
\newcommand{\gchoop}{{\colorOp \ +\ }}
\newcommand{\grecop}{{\colorOp *}}
\newcommand{\grecopp}{{\colorOp{@}}}
\newcommand{\gname}[1][i]{{\colorNode{\scriptstyle\textsf{#1}}}}
\newcommandx{\nmerge}[2][1={i},2={},usedefault=@]{
  \ifempty{#2}{
    \ifempty{#1}{\mu}{-\gname[{#1}]}
  }{-{#2}}
}
\mkfun{\esbj}{sbj}{\ae}
\newcommand{\esubject}{\esbj}
\newcommandx{\gnode}[2][1=i,2=\gint,usedefault=@]{
  {\ifempty{#1}{}{\colorNode{\gname[{#1}].}}} {#2}
}
\newcommandx{\gint}[4][1=i,2=\ptp,3=\msg,4=\q,usedefault=@]{
  \ptp[#2] {\colorOp \xrightarrow{\scriptscriptstyle\gname[#1]}} \ptp[#4] \colon {\msg[{#3}]}
}
\newcommandx{\gout}[4][1=\gname,2=\ptp,3=\msg,4={\ptp[C]},usedefault=@]{
  \achan[{#2}][{#4}] {\colorOp {\colorOp{!}}} {\msg[{#3}]}
}
\newcommandx{\gin}[4][1=\gname,2=\ptp,3=\msg,4={\ptp[C]},usedefault=@]{
  \achan[{#2}][{#4}] {\colorOp {\colorOp{?}}} {\msg[{#3}]}
}
\newcommandx{\gseq}[3][1=i,2={\aG},3={\aG'},usedefault=@]{
  \gnode[{#1}][{#2} \gseqop {#3}]
}
\newcommandx{\gpar}[3][1=i,2={\aG},3={\aG'},usedefault=@]{
  \gnode[{#1}][\ifempty{#1}{{#2} \gparop {#3}}{({#2} \gparop {#3})}]
}
\newcommandx{\gcho}[3][1=i,2={\aG},3={\aG'},usedefault=@]{
  \gnode[{#1}][\ifempty{#1}{{#2} \gchoop {#3}}{\big({#2} \gchoop {#3}\big)}]
}
\newcommandx{\gchov}[3][1=i,2={\aG},3={\aG'},usedefault=@]{
  \gnode[{#1}][\left(
  \begin{array}l
    \ifempty{#1}{{#2} \\ \gchoop \\ {#3}}{\!\!{#2} \\ \gchoop \\ {#3}}
  \end{array}\right)
  ]
}
\newcommandx{\grec}[3][1=i,2={\aG},3={\p},usedefault=@]{
  \gnode[{#1}][\ifempty{#1}{\grecop {#2} \grecopp {#3}}{\big(\grecop {#2} \grecopp {#3}\big)}]
}
\newcommand{\getcentroid}[2]{
    \coordinate (tmpgatecoord) at (0,0);
    \foreach \n [count=\i] in {#1}{
      \path (\n);
      \coordinate (tmpgatecoord) at ($(tmpgatecoord) + (\n)$);
      \coordinate (#2) at ($1/\i*(tmpgatecoord)$);
      %\node at (#2) {#2};
    }
}
\tikzset{
  hgsem/.style={
    draw,
    node distance=2cm and 1cm,
    transform shape,
    smooth,
    every node/.style = {font=\sffamily\bfseries}
  }
}
\tikzset{
  hgstyle/.style={
    src color={#1},
    tgt color={#1},
    centroid color={#1},
    centroid label={#1},
    centroid name={#1},
    centroid radius={#1},
    centroid ratio={#1},
    xoffset={#1},
    yoffset={#1},
    xsrcoffset={#1},
    ysrcoffset={#1},
    xtgtoffset={#1},
    ytgtoffset={#1},
    font={#1},
    centroid angle={#1},
    centroid tolerance={#1}
  },
  src color/.store in = \hgsrccol,
  tgt color/.store in = \hgtgtcol,
  centroid color/.store in =\hgfillcolor,
  centroid label/.store in =\hglabel,
  centroid name/.store in =\hgname,
  centroid radius/.store in = \hgradius,
  centroid ratio/.store in = \hgratio,
  xoffset/.store in =\hgxoffset,
  yoffset/.store in =\hgyoffset,
  xsrcoffset/.store in =\hgxsrcoffset,
  ysrcoffset/.store in =\hgysrcoffset,
  xtgtoffset/.store in =\hgxtgtoffset,
  ytgtoffset/.store in =\hgytgtoffset,
  centroid angle/.store in =\hgangle,
  centroid tolerance/.store in =\hgtolerance,
  src color = black,
  tgt color = black,
  centroid color = orange!40,
  centroid label={},
  centroid name={dummycentroid},
  centroid radius = .7pt,
  centroid ratio = .35,
  xoffset = 0,
  yoffset = 0,
  xsrcoffset = 0,
  ysrcoffset = 0,
  xtgtoffset = 0,
  ytgtoffset = 0,
  font=\sffamily\scriptsize,
  centroid angle=0,
  centroid tolerance=10pt
}
\newcommandx{\mkhg}[5][1={},4={},5={},usedefault=@]{
  \begingroup
  \tikzset{#1}
  \StrCount{#2,}{,}[\l] % from package xxstring
  \StrCount{#3,}{,}[\m] % from package xxstring
  \ifthenelse{\l = 1 \AND \m = 1}{
    \ifempty{#4}{
      \ifempty{#5}{
        \path[hgsem, ->, >=stealth', shorten >=1pt] (#2) -- (#3);
      }{
        \path[hgsem, ->, >=stealth', shorten >=1pt] (#2) #5 (#3);
      }
    }{
      \ifempty{#5}{
        \path[hgsem, ->, >=stealth', shorten >=1pt, #4] (#2) -- (#3);
      }{
        \path[hgsem, ->, >=stealth', shorten >=1pt, #4] (#2) #5 (#3);
      }
    }
  }{
    \coordinate (srcoffset) at (\hgxsrcoffset,\hgysrcoffset);
    \coordinate (tgtoffset) at (\hgxtgtoffset,\hgytgtoffset);
    \getcentroid{#2}{srccentroid};
    \getcentroid{#3}{tgtcentroid};
    \node[label={left:\hglabel}] (\hgname) at ($(srccentroid)!{1-\hgratio}!\hgangle:(tgtcentroid) + (\hgxoffset,\hgyoffset)$) {};
    \pgfgetlastxy \xc \yc;
    \pgfmathtruncatemacro{\xcontrol}{\xc};
    \pgfmathtruncatemacro{\ycontrol}{\yc};
    \foreach \n in {#2}{
      \path (\n);
      \pgfgetlastxy \xntmp \yntmp;
      \pgfmathtruncatemacro{\xn}{\xntmp};
      \pgfmathtruncatemacro{\yn}{\yntmp};
      \pgfmathsetmacro\xtmpdiff{abs(\xn - \xcontrol + \hgxsrcoffset)};
      \pgfmathsetmacro\ytmpdiff{abs(\yn - \ycontrol + \hgytgtoffset)};
      \ifdim \xtmpdiff pt > \hgtolerance
      \ifempty{#4}{
        \path[hgsem, \hgsrccol] (\n) .. controls ($(srccentroid.center) + (srcoffset)$) .. (\hgname.center);
      }{
        \path[hgsem, \hgsrccol] (\n) .. controls ($(srccentroid.center) + (srcoffset)$) .. (\hgname.center);
      }
      \else
      \ifempty{#4}{
        \path[hgsem, \hgsrccol] (\n) -- (\hgname.center);
      }{
        \path[hgsem, \hgsrccol, #4] (\n) -- (\hgname.center);
      }
      \fi
    }
    \foreach \n in {#3}{
      \path (\n);
      \pgfgetlastxy \xntmp \yntmp;
      \pgfmathtruncatemacro{\xn}{\xntmp};
      \pgfmathtruncatemacro{\yn}{\yntmp};
      \pgfmathsetmacro\xtmpdiff{abs(\xn - \xcontrol)};
      \pgfmathsetmacro\ytmpdiff{abs(\yn - \ycontrol)};
      \ifdim \xtmpdiff pt > \hgtolerance
      \ifempty{#4}{
        \path[hgsem, ->, >=stealth', shorten >=1pt, \hgtgtcol] (\hgname.center) .. controls (tgtcentroid.center) and ($(tgtcentroid.center) + (tgtoffset)$) .. (\n);
      }{
        \path[hgsem, ->, >=stealth', shorten >=1pt, \hgtgtcol,#4] (\hgname.center) .. controls (tgtcentroid.center) and ($(tgtcentroid.center) + (tgtoffset)$) .. (\n);
      }
      \else
      \ifempty{#4}{
        \path[hgsem, ->, >=stealth', shorten >=1pt, \hgtgtcol] (\hgname.center) --  (\n);
      }{
        \path[hgsem, ->, >=stealth', shorten >=1pt, \hgtgtcol] (\hgname.center) --  (\n);
      }
      \fi
    }
    \fill[\hgfillcolor] (\hgname) circle [radius=\hgradius];
  }
  \endgroup
}
\newcommandx{\hgordeq}[1][1={\aH},usedefault=@]{\sqsubseteq_{#1}}
\newcommandx{\gintsem}[4][4=.5]{
  \tikz[hgsem,scale=#4,every node/.style={font=\scriptsize}]{
    \node (out) {$\aout[{#1}][{#2}][][{\msg[{#3}]}]$};
    \node[below = 20pt of out] (in) {$\ain[{#1}][{#2}][][{\msg[{#3}]}]$};
    \mkhg{out}{in};
  }
}
\newcommandx{\gsem}[2][1={\aG},2={},usedefault=@]{[\![ {#1} ]\!]_{#2}}
\newcommandx{\rbot}{\text{undef}}
\newcommandx{\rtrs}[1][1={\aH},usedefault=@]{{#1}^{\star}}
\newcommandx{\gord}[1][1={\aG},usedefault=@]{\leq_{#1}}
\newcommandx{\gordeq}[1][1={\aG},usedefault=@]{\leq_{#1}}
\mkfun{\cause}{cs}{}
\mkfun{\effect}{ef}{}
\newcommandx{\aW}{w}
\newcommandx{\rlang}{\mathsf{L}}
\newcommand{\gfun}[1]{\ensuremath{\mathsf{\colorFun #1}}}
\mkfun{\eact}{\gfun{act}}{}
\mkfun{\enode}{\gfun{cp}}{}
\mkuop{\rmax}{\gfun{max}}{\aH}
\mkuop{\rmin}{\gfun{min}}{\aH}
\mkuop{\rMAX}{\gfun{lst}}{\aH}
\mkuop{\rMIN}{\gfun{fst}}{\aH}
\newcommandx{\rseq}[2][1=\aG,2={\aG'},usedefault=@]{\gfun{seq}({#1},{#2})}
\newcommandx{\rpar}[2][1=\aG,2={\aG'},usedefault=@]{\gfun{par}({#1},{#2})}
\newcommandx{\gproj}[2][1=\aG,2=\ptp]{{#1}\downarrow_{#2}}
\newcommandx{\cinit}[1][1={\aQzero},usedefault=@]{{#1}}
\newcommandx{\cfinal}[1][1={q_e},usedefault=@]{{#1}}
\newcommandx{\geproj}[4][1=\aG,2=\ptp,3=\cinit,4=\cfinal,usedefault=@]{
  {#1}\downarrow_{#2}^{{#3},{#4}}
}
\newcommand*{\StrikeThruDistance}{0.15cm}%
\tikzset{strike thru arrow/.style={
    decoration={markings, mark=at position 0.5 with {
        \draw [blue, thick,-] 
            ++ (-\StrikeThruDistance,-\StrikeThruDistance) 
            -- ( \StrikeThruDistance, \StrikeThruDistance);}
    },
    postaction={decorate},
}}
\newcommandx{\ich}[1][1={\aG},usedefault=@]{{#1}^{\oplus}}
\newcommandx{\ichedges}[2][1={\aG},2={\gname},usedefault=@]{{#1}^{\oplus}({#2})}
\newcommandx{\parts}[1]{2^{#1}}
\newcommandx{\actch}{c}
\newcommandx{\soundactch}[2][1={\aG},2={\actch},usedefault=@]{{#1} \,\circledR\, {#2}}
\newcommandx{\rOnActch}[2][1={\aG},2={\actch},usedefault=@]{{#1} \setminus {#2}}
\newcommandx{\rOnActchClean}[2][1={\aG},2={\actch},usedefault=@]{{#1} \circledR {#2}}
\newcommandx{\rAllEvents}[1][1={\aG},usedefault=@]{\mathit{dom}(#1)}
\newcommand{\AV}{\mathcal{V}}
\newcommand{\aH}{H}
\newcommandx{\hgvertex}[2][1=\al,2=\gname,usedefault=@]{{#1}_{\textcolor{red}{[{#2}]}}}
\newcommand{\aE}{{\colorE E}}
\renewcommand{\ae}[1][e]{{\colorE{#1}}}
\newcommand{\al}[1][l]{{\colorE{#1}}}
\newcommandx{\hyedge}[1]{\{#1\}}
\newcommandx{\rdiv}[2][1=\gcho,2=\ptp,usedefault=@]{
  \gfun{div}_{#2}(#1)
}
\newcommandx{\rrdiv}[5][1={\aG},2={\aG'},3={\AV},4={\AV'},5=\ptp,usedefault=@]{
  \gfun{div}^{#3,#4}_{#5}(#1,#2)
%#1 \xhookleftarrow{#3} \!\!{_{#5}} \!\!\xhookrightarrow{#4} #2
}
\newcommandx{\pdiv}[3][1={\apom_1},2={\apom_2},3={\apom},usedefault=@]{
  \gfun{div}_{#3}(#1,#2)
%#1 \xhookleftarrow{#3} \!\!{_{#5}} \!\!\xhookrightarrow{#4} #2
}
\newcommandx{\pfork}[3][1={\apom_1},2={\apom_2},3={\apom},usedefault=@]{
  \gfun{fork}_{#3}(#1,#2)
%#1 \xhookleftarrow{#3} \!\!{_{#5}} \!\!\xhookrightarrow{#4} #2
}
\newcommandx{\mkint}[6][3=i,4=\p,5=\msg,6=\q,usedefault=@]{
  \node[bblock,{#1}] (#2) {$\gint[#3][#4][#5][#6]$};
%[block,]
}
\newcommandx{\mkgraph}[3][1=.5cm]{
  \node[source,above = #1 of {#2}] (src#2) {};
  \node[sink,below  = #1 of {#3}] (sink#3) {};
  \path[line] (src#2) -- (#2);
  \path[line] (#3) -- (sink#3);
}
\newcommandx{\mkloop}[4][1=.5,2=1.5]{
  \node[ogate,above = #1 of {#3}] (entry#3) {};
  \pgfgetlastxy \xentry \yentry;
  \pgfmathtruncatemacro{\xentryrounded}{\xentry};
  \node[ogate,below  = #1 of {#4}] (exit#4) {};
  \pgfgetlastxy \xexit \yexit;
  \pgfmathtruncatemacro{\xexitrounded}{\xexit};
  \path[line] (entry#3) -- (#3);
  \path[line] (#4) -- (exit#4);
  \pgfmathsetmacro\tmpdiff{abs(\xentryrounded - \xexitrounded)}
  \path[line] (exit#4) -|  ($(exit#4)+(\tmpdiff,0)+(#2,0)$) |- (entry#3);
}
\newcommandx{\mkfork}[4][2=gatenode,3=i,4=.6,usedefault=@]{
  \mkgatebegin{#1}[{\gname[#3]}][agate][#4]{#2}
}
\newcommandx{\mkbranch}[4][2=gatenode,3=i,4=.6,usedefault=@]{
  \mkgatebegin{#1}[{\gname[#3]}][ogate][#4]{#2}
}
\newcommandx{\mkgatebegin}[5][2={},3=ogate,4=.5]{
  % #1 list of nodes
  % #2 control point
  % #3 gate type
  % #4 vertical position offset
  % #5 name of the gate node
  %
  \coordinate (gatecord) at (0,0);
  \foreach \n [count=\i] in {#1}{
    \pgfgetlastxy \xc \yc;
    \path (\n);
    \pgfgetlastxy \xn \yn;
    \coordinate (gatecord) at ($(gatecord) + (\xn,0)$);
    \coordinate (gatecord) at ($1/\i*(gatecord)$);
    \ifdim \yn < \yc
    \node (max) at (0,\yc) {};
    \else
    \node (max) at (0,\yn) {};
    \fi
  }
  \coordinate (gatecord) at ($(gatecord) + (0,#4) + (max)$);
  \node[#3,label={below:$#2$}] (#5) at (gatecord) {};
  \pgfgetlastxy{\xgate}{\ygate};
  \pgfmathtruncatemacro{\xgateround}{\xgate};
  \StrCount{#1,}{,}[\l] % from package xxstring
  \ifnum \l < 2 {\errmessage{#1 argument should be a comma-separated list of lenght >= 2}}
  \else{
    \foreach \n in {#1}{
      \path (\n);
      \pgfgetlastxy{\xnode}{\ynode};
      \pgfmathtruncatemacro{\xnround}{\xnode};
      \pgfmathsetmacro\tmpdiff{abs(\xnround - \xgateround)}
      \ifdim \tmpdiff pt > 1 pt \path[line] (#5) -| (\n);
      \else
        \path[line] (#5) -- (\n);
      \fi
    }
  }
  \fi
}
\newcommandx{\mkmerge}[4][2=gatenode,3=i,4=0,usedefault=@]{\mkgateend{#1}[{\ifempty{#3}{}{\nmerge[#3]}}][ogate][#4]{#2}}
\newcommandx{\mkjoin}[4][2=gatenode,3=i,4=0,usedefault=@]{\mkgateend{#1}[{\ifempty{#3}{}{\nmerge[#3]}}][agate][#4]{#2}}
\newcommandx{\mkgateend}[5][2={},3=ogate,4=.5]{
  % #1 list of nodes
  % #2 control point
  % #3 gate type
  % #4 vertical position offset
  % #5 name of the gate node
  %
  \coordinate (gatecord) at (0,0);
  \foreach \n [count=\i] in {#1}{
    \pgfgetlastxy \xc \yc;
    \path (\n);
    \pgfgetlastxy \xn \yn;
    \coordinate (gatecord) at ($(gatecord) + (\xn,0)$);
    \coordinate (gatecord) at ($1/\i*(gatecord)$);
    \ifdim \yn > \yc
    \node (min) at (0,\yc) {};
    \else
    \node (min) at (0,\yn) {};
    \fi
  }
  \coordinate (gatecord) at ($(gatecord) - (0,#4) + (min)$);
  \node[#3,label={above:$#2$}] (#5) at (gatecord) {};
  \pgfgetlastxy{\xgate}{\ygate};
  \pgfmathtruncatemacro{\xgateround}{\xgate};
  \StrCount{#1,}{,}[\l] % from package xxstring
  \ifnum \l < 2 {\errmessage{#1 argument should be a comma-separated list of lenght >= 2}}
  \else{
    \foreach \n in {#1}{
      \path (\n);
      \pgfgetlastxy{\xnode}{\ynode};
      \pgfmathtruncatemacro{\xnround}{\xnode};
      \pgfmathsetmacro\tmpdiff{abs(\xnround - \xgateround)}
      \ifdim \tmpdiff pt > 1 pt \path[line] (\n) |- (#5);
      \else
        \path[line] (\n) -- (#5);
      \fi
    }
  }
  \fi
}
\newcommand{\gatedistancein}{3pt}
\newcommand{\gatedistanceinand}{2pt}
\tikzset{
  src/.style={draw,circle,fill=white,
    minimum size=2mm,
    inner sep=0pt
  },
  sink/.style={draw,circle,double,fill=white,
    minimum size=1.5mm,
    inner sep=0pt
  },
  node/.style={draw,circle,fill=black,
    minimum size=2mm,
    inner sep=0pt
  },
  source/.style={draw,circle,fill=white,
    minimum size=3mm,
    inner sep=0pt
  },
  sink/.style={draw,circle,double,fill=white,
    minimum size=3mm,
    inner sep=0pt
  },
  % ACTION
  block/.style = {rectangle, draw=gray, align=center, fill=orange!25, rounded corners=0.1cm,
    minimum size=5mm, inner sep=2pt},
  prenode/.style = {minimum size=9pt,inner sep=2pt, font=\Large},
  bblock/.style = {rectangle, draw=blue!50, opacity=.5, line width=1pt, align=center, fill=white, rounded corners=0.1cm,
    minimum size=7mm, inner sep=2pt},
  prenode/.style = {minimum size=9pt,inner sep=2pt, font=\Large},
  % AND GATE
  agate/.style={draw, rectangle,
    minimum size=3mm,
    inner sep=0pt,
    fill=orange!25,
    postaction={path picture={% 
        \draw[red]
        ([yshift=\gatedistanceinand]path picture bounding box.south) --
        ([yshift=-\gatedistanceinand]path picture bounding box.north) ;}}
  },
  % ORGATE
  ogate/.style = {
    diamond, draw, fill=orange!25,
    minimum size=4mm,
    inner sep=0pt,
    postaction={path picture={% 
        \draw[red]
        ([yshift=\gatedistancein]path picture bounding box.south) -- ([yshift=-\gatedistancein]path picture bounding box.north)
        ([xshift=-\gatedistancein]path picture bounding box.east) -- ([xshift=\gatedistancein]path picture bounding box.west)
        ;}}},
  altogate/.style = {
    diamond, draw,
    minimum size=4mm,
    inner sep=0pt,
    postaction={path picture={% 
        \draw
        ([yshift=\gatedistancein]path picture bounding box.south) -- ([yshift=-\gatedistancein]path picture bounding box.north)
        ([xshift=-\gatedistancein]path picture bounding box.east) -- ([xshift=\gatedistancein]path picture bounding box.west)
        ;}}},
  altgate/.style={draw, rectangle,
    minimum size=3mm,
    inner sep=0pt,
    postaction={path picture={% 
        \draw
        ([yshift=\gatedistanceinand]path picture bounding box.south) --
        ([yshift=-\gatedistanceinand]path picture bounding box.north) ;}}},
  % ogate or agate
  anygate/.style = {circle, draw, fill=white,
    minimum size=4mm,
    inner sep=0pt,
    postaction={path picture={% 
        \draw[black]
        ([xshift=-\gatedistancein,yshift=\gatedistancein]path picture bounding box.south east) --
        ([xshift=\gatedistancein,yshift=-\gatedistancein]path picture bounding box.north west)
        ([xshift=-\gatedistancein,yshift=-\gatedistancein]path picture bounding box.north east) --
        ([xshift=\gatedistancein,yshift=\gatedistancein]path picture bounding box.south west)
        ;}}
  },
  smallglobal/.style={
        node distance=1cm and 0.8cm, semithick, scale=0.8, every node/.style={transform shape}
  },
  % DOTS
  elli/.style = {draw,densely dotted,-},
  % 
  % LINES
  line/.style = {draw,->, rounded corners=0.07cm,>=latex},
  nline/.style = {draw,semithick, ->},
  pline/.style = {draw,->,>=latex},
  node distance=1cm and 0.7cm,
  baseline=(current  bounding  box.center),
  local/.style={rectangle, draw, fill=\fillcolor, drop shadow,
    text centered, rounded corners, minimum height=5em
  },
  bigar/.style={
    draw,very thick, ->
  },
  process/.style={rectangle, draw=gray, fill=\fillcolor, drop shadow,
    text centered, minimum height=5em,text=gray
  },
  choreo/.style={rectangle, draw, fill=\fillcolor, drop shadow,
    text centered, rounded corners, minimum height=5em
  },
  % CFSM
  mycfsm/.style={
        font=\footnotesize,
        initial where=above,
        ->,>=stealth,auto, node distance=1cm and 1cm,
        scale=1, every node/.style={transform shape},
        every state/.style=inner sep=2pt,
        baseline=(current  bounding  box.center)
  },
  machinecloud/.style={
    cloud, cloud puffs=10, cloud ignores aspect, minimum height=.1cm, minimum width=2cm, draw
  },
  fitting node/.style={
    inner sep=0pt,
    fill=none,
    draw=none,
    reset transform,
    fit={(\pgf@pathminx,\pgf@pathminy) (\pgf@pathmaxx,\pgf@pathmaxy)}
  },
  mypetri/.style={
    font=\footnotesize,
    baseline=(current  bounding  box.center)
  },
  silentrans/.style = {rectangle, draw=black, align=center, fill=black,
    minimum height=1pt,
    minimum width=15pt,
    inner sep=1.5pt
  },
  reset transform/.code={\pgftransformreset},
  tmtape/.style={draw,minimum size=1.2cm}
}
\newcommand{\gunlessop}{\mbox{\colorOp\tiny\tt unless}}
\newcommandx{\gtry}[5][1=\gname,2={\aG_1 \gchoop \cdots \gchoop \aG_n},3=\gin,4=\gout,5={j},usedefault=@]{
  \def\foo{\gtryop\ {#2} \ \gcatchop\ {#3} {\colorOp \Rightarrow} {#4} {\colorOp \bullet} {\gname[{#5}]}}
  \gnode[{#1}][{\ifempty{#1} {\foo } { \big(\foo \big) }}]
}
\newcommandx{\gtrycatch}[4][1=\gname,2={\aG},3=\gin,4={\aG'},usedefault=@]{
  \def\foo{\gtryop\ {#2} \ \gcatchop\ {#3} \gdoop\ {#4}}
  \gnode[{#1}][{\ifempty{#1} {\foo} {\big( \foo \big) }}]
}
\def\colorGuard{\color{cyan}}
\newcommand{\aguard}{{\colorGuard \phi}}
\newcommandx{\agG}[2][1={\aG},2=\aguard]{{#1} \ifempty{#2}{}{\ \gunlessop\ {#2}}}
\newcommandx{\grcho}[4][1=\gname,2={\agG},3={\agG[\aG'][\aguard']},4={\cdots},usedefault=@]{
  \def\foo{{#2} {\ \ifempty{#4}{\gchoop}{\gchoop \ {#4}\  \gchoop}\ } {#3}}
  \gnode[{#1}][\ifempty{#1}{\foo}{\big( \foo \big)}]
}
\newcommandx{\ggprefix}[3][1=\ptp,2={\aR},3={\aR'},usedefault=@]{f_{#1}} % it was \newcommandx{\common}{...}
\newcommand{\aconfigfn}{\chi}
\newcommand{\aconfig}{\ell}
\newcommand{\lstates}{\statemap}
\newcommandx{\sysconfig}[3][1=\lstates,2=\aconfigfn,3={},usedefault=@]{
  \conf{ {#1},{#2} \ifempty{#3}{}{, #3} }
}
\newcommand{\sysctxfn}[1][]{\gamma_{#1}}
\newcommandx{\sysctx}[2][1=\aQ,2={},usedefault=@]{({#1},\sysctxfn[{#2}])}
\newcommandx{\alog}[4][1=\msg,2=q,3=\gname,4=t,usedefault=@]{\big({#1},{#2},{#3},{#4}\big)}
\newcommand{\aCM}{M}\newcommand{\aM}{\aCM}
\newcommand{\aQ}{Q}
\newcommandx{\aQzero}[1][1=,usedefault=@]{
  {\ifempty{#1}{q_0}{q_{0#1}}}
}
\newcommand{\badbranches}[1][]{\beta\ifempty{#1}{}{\big({#1}\big)}}
\newcommand{\aTrs}{\tset}
\newcommandx{\guardedaction}[2][1=\al,2=\aguard,usedefault=@]{
  {#1} \ifempty{#2}{}{/} {#2}
}
\newcommandx{\atrM}[4][1=q,2=\al,3={\hat q,\hat \al, \aguard},4=q',usedefault=@]{
  {#1} \xrightarrow[{#3}]{\guardedaction[{#2}][]} {{#4}}
}
\newcommandx{\atrS}[5][
  1={\sysconfig[@][@][\badbranches]},
  2=\al,
  3=\aguard,
  4={\sysconfig[\lstates'][\aconfigfn'][\badbranches]},
  5=\sysctx,usedefault=@
]{
  {#1} \xRightarrow{\qquad} {{#4}}
}
\newcommandx{\arevtrS}[2][
  1={\sysconfig[@][@][\badbranches]},
  2={\sysconfig[\lstates'][\aconfigfn'][\badbranches']},
  usedefault=@
]{
  {#1} \rightsquigarrow {#2}
}
\newcommand{\aCS}{S}
\newcommand{\abuffer}{b}
\newcommandx{\enables}[2][1=\aconfigfn,2=\aguard,usedefault=@]{{#1} \vdash {#2}}
\newcommandx{\gprojfn}[5][1=\aG,2=\ptp,3=\cinit,4=\cfinal,5={},usedefault=@]{
  \mathbf{proj}_{#2}({#1},{#3},{#4}\ifempty{#5}{}{,{#5}})
}
\newcommandx{\rbp}[3][1=\aG,2=\aconfigfn,3=\achan,usedefault=@]{\mathtt{RBP}_{{#1},{#2}}\ifempty{#3}{}{\big({#3}\big)}}
\newcommand{\apseudoCFSM}{\mathtt{M}}
\newcommandx{\pseudoseq}[2][1=\apseudoCFSM,2=\apseudoCFSM',usedefault=@]{{#1}  ; {#2}}
\newcommandx{\pseudoCFSM}[4][1=\aQ,2=\aQzero,3=\cfinal,4=\aTrs,usedefault=@]{(#1 \ ; #2 \ ; #3 \ ; #4)}
\newcommandx{\markt}[3][1=\hat{\al},2=\hat{q},3=\aguard,usedefault=@]{\%\big({#1} , {#2}, {#3}\big)}
\newcommandx{\borderfn}[2][1=\aconfig,2=\aloop,usedefault=@]{
  \mathsf{border}_{{#2}}\ifempty{#1}{}{\big({#1}\big)}
}
\newtheorem{lemma}{Lemma}
\title{Realisability of Pomsets via Communicating Automata\thanks{
    Research partly supported by the EU H2020-RISE-2017 project BehAPI
    and the EU COST Action IC1405.
    \newline
    The authors thank the anonymous reviewers for their comments and
    the interesting discussions on the forum of ICE18 .
  }
}
\author{
  Roberto Guanciale \institute{KTH Royal Institute of Technology, Sweden} \email{robertog@kth.se}
  \and
  Emilio Tuosto \institute{Department of Informatics, University of Leicester, UK} \email{emilio@le.ac.uk}
}
\newcommand{\fnrestriction}[1]{\lvert_{#1}}
\newcommand{\size}[1]{\mid #1 \mid}
\newcommand{\projword}[2]{{#1}\!\downharpoonleft_{#2}}
\newcommandx{\aQfinal}[1][1=,usedefault=@]{
  {\ifempty{#1}{F}{F_{#1}}}
}
\newcommand{\CC}[1][2]{\textsf{\textbf{CC{#1}}}}
\newcommand{\CCP}[1][2]{\CC[{#1}-POM]}
\newcommand{\pref}[1][\rlang]{{\colorSbj{\textsf{pref}\ifempty{#1}{}{(#1)}}}}
\newcommand{\apomMtrans}[1]{\xrightarrow{#1}}
\newcommand{\rsubtype}{\sqsubseteq}
\def\colorSbj{\color{NavyBlue}}
\newcommandx{\countEvents}[3][1=\apom,2=\ae,3=\al,usedefault=@]{\mathsf{card}^{#1}_{#3}(#2)}
\begin{document}
\maketitle

\begin{abstract}
  Pomsets are a model of concurrent computations introduced by
  Pratt.
  They can provide a syntax-oblivious description of semantics of
  coordination models based on asynchronous message-passing, such as
  Message Sequence Charts (MSCs).
  In this paper, we study conditions that ensure a specification
  expressed as a set of pomsets can be faithfully realised via
  communicating automata.

  Our main contributions are (i) the definition of a realisability
  condition accounting for termination soundness, (ii) conditions for
  global specifications with \quo{multi-threaded} participants, and
  (iii) the definition of realisability conditions that can be decided
  directly over pomsets.
  A positive by-product of our approach is the efficiency gain
  in the verification of the realisability conditions
  obtained when restricting to specific classes of choreographies
  characterisable in term of behavioural types.
\end{abstract}

\section{Introduction}
\label{sec:intro}
\emph{Asynchronous message-passing} is a widely adopted paradigm for
the specification, design, and implementation of communication-centred
applications or systems.
This paradigm has been used at different abstraction levels, including
formal models (e.g. $\pi$-calculus~\cite{sd01,mil99} and communicating
automata~\cite{bz83}), specification languages
(e.g. \emph{message-sequence charts} (MSCs)~\cite{itu11}),
choreography languages (e.g. global calculus~\cite{chy07} and
WS-CDL~\cite{wscdl}), programming languages (e.g. actor models for
Erlang, Scala, and Go).

Choreographic approaches are gaining momentum to handle the complexity
of distributed systems~\cite{kum17}.
These frameworks envisage two views: a global specification and a
local one.
The former defines the order and constraints under which messages are
sent and received, while the local view defines the behavior of each
participant.
The composition of local participants should respect the global
specification.
In this setting, the \emph{realisability} of the global specifications
becomes a concern since there could be some specifications that are
impossible to implement using the local views in a given communication
model.

We propose a general semantic representation based on partially
ordered multisets (pomsets)~\cite{pratt1986modeling}, capable
of specifying global behaviors and analyze their realisability in
terms of asynchronous message-passing.
Our framework assumes asynchronous point-to-point communications and
features a notion of realisability that
\begin{enumerate}
\item \label{it:term} rules out systems where some participants cannot
  ascertain termination
\item \label{it:par} admits multi-threaded participants
\item \label{it:syn} allows us to define syntax-oblivious conditions
\item \label{it:decide} can be decided by an analysis of the partial
  orders of communication events.
\end{enumerate}
These features have several practical advantages.
Indeed, by~\eqref{it:term}, we admit systems where participants may
get stuck on some messages, only if that is specified in the global
model.
The use of multi-threaded participants~\eqref{it:par} makes our
framework more expressive than existing ones (see discussion on
this point in~\cite{gt18}).
Syntax independent conditions~\eqref{it:syn} are applicable to
different global models.
Finally, \eqref{it:decide} enables the identification of design errors
in global models rather than in execution traces where they are harder
to analyse.

\paragraph{Outline}
\cref{sec:msc} gives the basic definitions.  \cref{sec:realisability}
introduces the problems of realisability and sound termination; also,
it provides verification conditions in the style of~\cite{aey03}.
\cref{sec:pomsets} presents the sufficient conditions for
realisability and sound termination that can be tested over partial
orders.
\cref{sec:implementation} discusses the complexity of the new verification
conditions.
Finally, \cref{sec:related} discusses related work and
\cref{sec:conc} draws some conclusions.

%%% Local Variables:
%%% mode: latex
%%% TeX-master: "main"
%%% End:

\section{Pomsets and message-sequence charts}
\label{sec:msc}
We collect the main definitions needed in the rest of the paper.
The material of this section is not an original
contribution\footnote{Except for the different definition of accepting
  states of
  communicating automata.}
and it is presented only to make the paper self-contained borrowing
and combining definitions and notations
from~\cite{gaifman1987partial,aey03,katoen1998pomsets,bz83}.

We borrow the formalisation of partially-ordered multi-set
of~\cite{gaifman1987partial}.
\begin{definition}[Lposets]
  A labelled partially-ordered set (lposet) is a triple
  $(\eset, \leq, \alf)$, with $\eset$ a set of events,
  $\leq \subseteq \eset \times \eset$ a reflexive, anti-symmetric, and
  transitive relation on $\eset$, and $\alf: \eset \rightarrow \lset$
  a labelling function mapping events in $\eset$ to labels in $\lset$.
\end{definition}
Intuitively, $\leq$ represents causality; for $\ae \neq \ae'$, if
$\ae \leq \ae'$ and both events occur then $\ae'$ is caused by $\ae$.
Note that $\alf$ is not required to be injective: for
$\ae \neq \ae' \in \eset$, $\alf(\ae) = \alf(\ae')$ means that $\ae$
and $\ae'$ model different occurrences of the same action.

\begin{definition}[Pomsets]\label{def:pomsets}
  Two lposets $(\eset, \leq, \alf)$ and $(\eset', \leq', \alf')$ are
  \emph{isomorphic} if there is a bijection
  $\phi: \eset \rightarrow \eset'$ such that $\ae \leq \ae' \iff
  \phi(\ae) \leq' \phi(\ae')$ and $\alf = \alf' \circ \phi$.
  A partially-ordered multi-set (of actions), pomset for short, is an
  isomorphism class of lposets.
\end{definition}
Using pomsets in place of lposets allows us to
abstract away from the names of events in
$\eset$.
In the following, $[\eset, \leq, \alf]$ denotes the isomorphism class
of $(\eset, \leq, \alf)$, symbols $\apom,\apom', \dots$ (resp.
$\aR, \aR', \dots$) range over (resp. sets of) pomsets, and we assume
that any $\apom$ contains at least one lposet which will possibly be
referred to as $(\esetof \apom$, $\leqof \apom, \alfof \apom)$.
\begin{figure*}[t!]
  \centering
  \begin{subfigure}[b]{0.45\textwidth}
    \centering
    \begin{tikzpicture}[every node/.style = {rectangle,draw}, transform shape]\tiny
      \node (ab!x) at (0,0)         {$\aout[@][@][][{\msg[x]}]$};
      \node (ac!x) at (0,-1.5)      {$\aout[@][C][][{\msg[x]}]$};
      \node (ab?x) at (1.5,0)       {$\ain[@][@][][{\msg[x]}]$};
      \node (db?y) at (1.5,-0.75)   {$\ain[D][@][][{\msg[y]}]$};
      \node (ac?x) at (3,-1.5)      {$\ain[@][C][][{\msg[x]}]$};
      \node (dc?y) at (3,-2.25)     {$\ain[D][C][][{\msg[y]}]$};
      \node (db!y) at (4.5,-0.75)   {$\aout[D][B][][{\msg[y]}]$};
      \node (dc!y) at (4.5,-2.25)   {$\aout[D][C][][{\msg[y]}]$};
      \path[->,draw] (ab!x) -- (ab?x);
      \path[->,draw] (ac!x) -- (ac?x);
      \path[->,draw] (db!y) -- (db?y);
      \path[->,draw] (dc!y) -- (dc?y);
      \path[->,draw] (ab!x) -- (ac!x);
      \path[->,draw] (db!y) -- (dc!y);
      \path[->,draw] (ab?x) -- (db?y);
      \path[->,draw] (ac?x) -- (dc?y);
    \end{tikzpicture}
    \caption{$\apom_{\eqref{fig:example:msg}_a}$}
  \end{subfigure}
  \begin{subfigure}[b]{0.45\textwidth}
    \centering 
    \begin{tikzpicture}[every node/.style = {rectangle,draw}, transform shape]\tiny
      \node (ab!x) at (0,0)    {$\aout[@][@][][{\msg[x]}]$};
      \node (ac!x) at (0,-1.5) {$\aout[@][C][][{\msg[x]}]$};
      \node (db?y) at (1.5,1)    {$\ain[D][@][][{\msg[y]}]$};
      \node (ab?x) at (1.5,0)    {$\ain[@][@][][{\msg[x]}]$};
      \node (dc?y) at (3,-.5)  {$\ain[D][C][][{\msg[y]}]$};
      \node (ac?x) at (3,-1.5) {$\ain[@][C][][{\msg[x]}]$};
      \node (db!y) at (4.5,1)    {$\aout[D][B][][{\msg[y]}]$};
      \node (dc!y) at (4.5,-.5)  {$\aout[D][C][][{\msg[y]}]$};
      \path[->,draw] (ab!x) -- (ab?x);
      \path[->,draw] (ac!x) -- (ac?x);
      \path[->,draw] (db!y) -- (db?y);
      \path[->,draw] (dc!y) -- (dc?y);
      \path[->,draw] (ab!x) -- (ac!x);
      \path[->,draw] (db!y) -- (dc!y);
      \path[->,draw] (db?y) -- (ab?x);
      \path[->,draw] (dc?y) -- (ac?x);
    \end{tikzpicture}
    \caption{$\apom_{\eqref{fig:example:msg}_b}$}
  \end{subfigure}
  \caption{\label{fig:example:msg} $\aR_{\ref{fig:example:msg}}$ =
    $\{\apom_{\eqref{fig:example:msg}_a},
    \apom_{\eqref{fig:example:msg}_b}\}$ is a set of two pomsets Two
    pomsets
  }
\end{figure*}
An event $\ae$ is an \emph{immediate predecessor} of an event $\ae'$
in a pomset $\apom$ if $\ae \neq \ae'$, $\ae \leqof \apom \ae'$, and
for all $\ae'' \in \esetof \apom$ such that
$\ae \leqof \apom \ae'' \leqof \apom \ae'$ either $\ae = \ae''$ or
$\ae' = \ae''$.
If $\ae$ is an immediate predecessor of $\ae'$ in $\apom$ then $\ae'$
is an \emph{immediate successor} of $\ae$ in $\apom$.

Hereafter, we consider pomsets labelled by communications representing
output and input actions between a sender and a receiver.
Technically, this is done by instantiating the set $\lset$ of labels
as follows.

Let $\ptpset$ be a set of \emph{participants} (ranged over by $\p$,
$\q$, etc.), $\msgset$ a set (of types) of \emph{messages} (ranged
over by $\msg$, $\msg[x]$, etc.).
We take $\ptpset$ and $\msgset$ disjoint.
Participants coordinate with each other by exchanging messages over
\emph{communication channels}, that are elements of the set
$\chset = (\ptpset \times \ptpset) \setminus \{(\p,\p) \sst \p \in
\ptpset\}$ and we abbreviate $(\p,\q) \in \chset$ as $\achan$.
The set of \emph{(communication) labels} $\lset$ is defined by
\[
\lset = \lset^! \cup \lset^?
\qquad\text{where}\qquad
\lset^! = \chset \times  \{!\} \times \msgset
\qquad\text{and}\qquad
\lset^? = \chset \times   \{?\} \times \msgset
\]
The elements of $\lset^!$ and $\lset^?$, outputs and inputs,
respectively represent \emph{sending} and \emph{receiving} actions; we
shorten $(\achan,!,\msg)$ as $\aout$ 
and $(\achan,?,\msg)$ as $\ain$
and let $\al$, $\al'$, $\ldots$ range over $\lset$.
The \emph{subject} of an action is defined by
\begin{align*}
  \esubject[\aout] = \p
  \quad \text{(the sender)}
  \qquad\text{and}\qquad
  \esubject[\ain] = \q
  \quad
  \text{(the receiver)}
\end{align*}
We will represent pomsets as the (variant\footnote{Edges of Hasse
  diagrams are usually not oriented; here we use arrow so to draw
  order relations between events also horizontally.} of) Hasse diagram
of the immediate predecessor relation as done in the examples of
\cref{fig:example:msg}.
For instance, in the pomset $\apom_{\eqref{fig:example:msg}_a}$ the
input event of $\q$ from $\p$ immediately precedes the input of $\q$
from $\ptp[d]$ while the events with those labels are in the reversed
order in $\apom_{\eqref{fig:example:msg}_b}$.

\begin{definition}[Projection of pomsets]
  The \emph{projection $\projpom \apom \p$ of a pomset $\apom$ on a participant
    $\p \in \ptpset$} is obtained by restricting $\apom$ to the events
  having subject $\p$: formally
  $\projpom \apom \p = [\eset_{\apom, \p},\ \leqof{\apom} \cap\
  (\eset_{\apom, \p} \times \eset_{\apom, \p}), \ \alfof{\apom}
  \fnrestriction{\eset_{\apom,\p}}]$ where
  $\eset_{\apom, \p} = \{\ae \in \esetof \apom \sst \esubject[\alfof
  \apom(\ae)] = \p \}$.
\end{definition}

Pomsets are a quite expressive model of global views of
choreographies~\cite{gt18}; in fact, MSCs\footnote{Pomsets can also be
  used to give semantics to the composition of MSCs;
  see~\cite{katoen1998pomsets}.
} can be defined as a subclass of pomsets.
\begin{definition}[Well-formedness, completeness, and MSCs]\label{def:msc}
  A pomset $\apom$ over $\lset$ is \emph{well-formed} if for every
  event $\ae \in \esetof{\apom}$
  \begin{enumerate}
  \item \label{it:out} if $\alfof{\apom}(\ae) = \aout$, there is at
    most one $\ae' \in \esetof \apom$ immediate successor of $\ae$
    in $\apom$
    with $\alfof{\apom}(\ae') = \ain$ (and, if such $\ae'$ exists, we
    say that $\ae$ and $\ae'$ \emph{match} each other)
  \item \label{it:in} if $\alfof{\apom}(\ae) = \ain$, there exists
    exactly one $\ae' \in \esetof{\apom}$ immediate predecessor of
    $\ae$ in $\apom$ with $\alfof{\apom}(\ae') = \aout$
  \item \label{it:loc} for each $\ae' \in \esetof \apom$, if $\ae$
    is an immediate predecessor of $\ae'$ and
    $\esubject[{\alfof \apom (\ae)}] \neq \esubject[{\alfof \apom(\ae')}]$ then
    $\ae$ and $\ae'$ are matching output and input events respectively
  \item \label{it:deg} for each $\ae' \neq \ae \in \esetof \apom$ with
    $\alfof{\apom}(\ae) = \alfof{\apom}(\ae') = \aout$, and for all
    $\bar \ae , \bar \ae' \in \esetof \apom$ immediate successors in
    $\apom$ of $\ae$ and of $\ae'$ respectively if
    $\alfof{\apom}(\bar \ae) = \alfof{\apom}(\bar \ae') = \ain$ and
    $\ae \leqof{\apom} \ae'$ then
    $\bar \ae' \not\leqof{\apom} \bar \ae$
  \end{enumerate}
  All conditions of \cref{def:msc} are straightforward but the last one,
  which requires that ordered output events with the same label cannot
  be matched by inputs that have opposite order.
  Pomset $\apom$ is \emph{complete} if there is no send event in
  $\esetof \apom$ without a matching receive event.

  A \emph{message-sequence chart} is a well-formed and complete pomset
  $\apom$ such that
  $\leqof{\projpom{\apom}{\p}}$ is a total order, for every $\p \in
  \ptpset$.
\end{definition}

Well-formed pomsets permit to represent inter-participant concurrency
since they keep independent not matching communication events of
different participants.
Also, well-formed pomsets allow intra-participant concurrency (i.e.
multi-threaded participants) since they do not require
$\leqof{\projpom{\apom}{\p}}$ to be totally ordered. MSCs
are obtained by restricting participants to be single-threaded.
The pomsets in \cref{fig:example:msg} are indeed MSCs describing
different orders of the same set of events.
Vertical arrows represent orders on the events of a participant; for
instance, the leftmost vertical arrow of
$\apom_{\eqref{fig:example:msg}_a}$ represents that the output of $\p$
to $\q$ precedes the one to $\ptp[c]$.
Basically, vertical arrows correspond to the \emph{projections} of the
pomsets on participants; these projections are obtained by restricting
$\apom_{\eqref{fig:example:msg}_a}$ and
$\apom_{\eqref{fig:example:msg}_b}$ to the events having the same
subject.
More precisely, the projection on one of the participants consists of
the $i$-th vertical arrow where $i$ is the alphabetical order of the
participant (e.g., the projection of $\ptp[C]$ is the third arrow).
The behaviour of $\p$ (and $\ptp[D]$) is the same in both MSCs: $\p$
(resp. $\ptp[D]$) first sends message $\msg[x]$ (resp.  $\msg[y]$)
to $\q$ and then to $\ptp[C]$.
The behaviour of $\q$ (and $\ptp[C]$) differs: in
$\apom_{\eqref{fig:example:msg}_a}$, $\q$ first receives the message
from $\p$ then the one from $\ptp[D]$, in
$\apom_{\eqref{fig:example:msg}_b}$, $\q$ has the same interactions
but in opposite order.
Likewise for $\ptp[C]$.

Well-formed pomsets capture the semantics of choreographic modelling languages; we used them to give semantics of choreographies in~\cite{gt18}.
In particular, 
to handle distributed choices of choreographies one uses sets of pomsets $\aR$, so that
each $\apom \in \aR$ yields the causal dependencies of the communications in a branch.
For instance, the set
$\aR_{\eqref{fig:example:msg}} = \{\apom_{\eqref{fig:example:msg}_a},
\apom_{\eqref{fig:example:msg}_b}\}$
represents a choice between the fact that $\q$ may receive messages
$\msg[x]$ and $\msg[y]$ in any order.

A natural question to ask is:
\begin{quote}
  \quo{is it possible to realise $\aR_{\eqref{fig:example:msg}}$ with
    asynchronously communicating local views?}
\end{quote}
The next section answers this question for pomsets similarly to what
done in~\cite{aey03} where closure conditions for MSCs where
identified.
%
 
%%% Local Variables:
%%% mode: latex
%%% TeX-master: "main"
%%% End:

\section{Realisability and termination soundness
of pomsets}
\label{sec:realisability}
Hereafter we assume all structures, including languages, words
 and pomsets, to be finite. 
Given a pomset $\apom$, a \emph{linearization} of $\apom$ is a string
in $\lset^\star$ obtained by considering a total ordering of the
events $\esetof{\apom}$ that is consistent with the partial order
$\leqof{\apom}$ , and then replacing each event by its label.
More precisely, let $\size{\esetof{\apom}}$ be the cardinality of
$\esetof \apom$, a word
$\aW = \alfof{\apom}(\ae_1) \dots
\alfof{\apom}(\ae_{\size{\esetof{\apom}}})$ is a \emph{linearization}
of a pomset $\apom$ if 
$\ae_1 \dots \ae_{\size{\esetof{\apom}}}$ 
 is a permutation that totally orders the
events in $\esetof{\apom}$ so that if $\ae_i \leqof{\apom} \ae_j$ then
$i \leq j$.
For a pomset $\apom$, define $\rlang(\apom)$ to be the set of all
linearizations of $\apom$.
A word $\aW$ over $\lset$ is \emph{well-formed}
(resp. \emph{complete}) if it is the linearization of a well-formed
(resp. \emph{complete}) pomset.
Hereafter, for a word $\aW \in \lset^\star$, $\projword{\aW}{\p}$
denotes the projection of $\aW$ that retains only those events where
participant $\p \in \ptpset$ is the subject.
Operation $\projword \_ \p$ acts element-wise on languages over
$\lset$.
The \emph{language} of a set of pomsets $\aR$ is simply defined as
$\rlang(\aR) = \bigcup_{\apom \in \aR} \rlang(\apom)$.

Local views are often conveniently modelled in terms of
\emph{communicating automata} of some sort.
An $\p$-\emph{communicating finite state machine ($\p$-CFSM)}
$\aCM = (\aQ,\aQzero,\aQfinal,\aTrs)$ is a finite-state automaton on
the alphabet $\lset$ such that, $\aQzero \in \aQ$ is the initial state,
$\aQfinal \subseteq \aQ$ are the accepting states, and
for each $q \trans{\al} {q'}$ holds
$\esubject[\al] = \p$.
A \emph{(communicating) system} is a map
$\aCS = (\aCM_{\p})_{\p \in \ptpset}$ assigning an $\p$-CFSM
$\aCM_{\p}$ to each participant $\p \in \ptpset$.
For all $\p \neq \q \in \ptpset$, 
we shall use an unbounded multiset
 $\abuffer_{\achan}$ where $\aCM_{\p}$ puts the message to
$\aCM_{\q}$ and from which $\aCM_{\q}$ consumes the messages from
$\aCM_{\p}$.

The semantics of communicating systems is defined in terms of 
transition relations between \emph{configurations} which keep track of
the state of each machine and the content of each buffer.
Let $\aCS = (\aCM_{\p})_{\p \in \ptpset}$ be a communicating
system.
A \emph{configuration} of $\aCS$ is a pair
$\aConf = \csconf q \abuffer$ where
$\vec q = (q_{\p})_{\p \in \ptpset}$ maps each participant $\p$ to its
local state $q_{\p} \in \aQ_{\p}$ and
$\vec \abuffer = (\abuffer_{\achan})_{\achan \in\chset}$ where
the buffer
$\abuffer_{\achan} : \msgset \to \mathbb{N}$ is a map assigning the
number of occurrences of each message; state $q_{\p}$
keeps track of the state of the automaton $\aCM_{\p}$ and buffer
$\abuffer_{\achan}$ keeps track of the messages sent from $\p$ to
$\q$.
The \emph{initial} configuration $\aConf_0$ is the one where, for all
$\p \in \ptpset$, $q_{\p}$ is the initial state of the corresponding
CFSM and all buffers are empty.
Given two configurations $\aConf = \csconf q \abuffer$ and
$\aConf' = \csconf{q'}{\abuffer'}$, relation
$\aConf \TRANSS{\ \al \ } \aConf'$ holds if there is a message
$\msg \in \msgset$ such that either (1) or (2) below holds:
\begin{center}
  \begin{tabular}{lr}
    \begin{minipage}{.4\linewidth}\small
      1. 
        $\al = \aout$ and $q_{\p} \trans[\p]{\al} {q'_{\p}}$
        and 
        \begin{itemize}
        \item[a.] $q'_{\p[C]} = q_{\p[C]}$ for all ${\p[C]} \neq \p \in \ptpset$ and
        \item[b.] $\abuffer'_{\achan} = \upd{\abuffer_{\achan}}{\msg}{\abuffer_{\achan}(\msg)\ +\ 1}$\\
        \end{itemize}
    \end{minipage}
    &
    \begin{minipage}{.5\linewidth}\small
      2.
        $\al = \ain[@][@][@]$ and $q_{\q} \trans[{\q}]{\al} {q'_{\q}}$
        and 
        \begin{itemize}
        \item[a.] $q_{\p[C]}' = q_{\p[C]}$ for all ${\p[C]} \neq \q \in \ptpset$  and
        \item[b.] $\abuffer_{\achan}(\msg) > 0$ and $\abuffer'_{\achan} = \upd{\abuffer_{\achan}}{\msg}{\abuffer_{\achan}(\msg) - 1}$
          \\
        \end{itemize}
    \end{minipage}
  \end{tabular}
\end{center}
where, $\upd f x y$ is the usual notation for the updating of a
function $f$ in a point $x$ of its domain with a value $y$.
Condition (1) puts $\msg$ on channel $\achan$, while (2) gets
$\msg$ from channel $\achan$ by simply updating the number of occurrences
of $\msg$ in the buffer $\abuffer_{\achan}$.
In both cases, any machine or buffer not involved in the transition
is left unchanged in the new configuration $\aConf'$.

The automata model adopted in~\cite{aey03} is a slight variant of
\emph{communicating-finite state machines} (CFSMs)~\cite{bz83}.
The two models have the same definition of automata; they differ in
how communication is attained, but are equivalent up to internal
transitions (which in~\cite{aey03} have been used to simplify proofs).
We used the definition of CFSMs in~\cite{bz83} to encompass accepting
states (necessary to define our notion of 
termination soundness~\cref{def:termination}).
Another minor deviation from the definition of CFSMs introduced
in~\cite{bz83} is that buffers become multisets in~\cite{aey03} while
in~\cite{bz83} they follow a FIFO policy.

Given a communicating system $\aCS$, a configuration
$\aConf= \csconf {q} {\abuffer}$ of $\aCS$ is ($i$) \emph{accepting}
if all buffers in $\vec \abuffer$ are empty and the local state $\vec q(\p)$
of each participant $\p$ is accepting while ($ii$) $\aConf$ is a
\emph{deadlock} if no accepting configuration is reachable from
$\aConf$.
We can then define the \emph{language of $\aCS$} as the set
$\langof{\aCS} \in \lset^{\star}$ of sequences $\al_0 \dots \al_{n-1}$
such that
$\aConf_0 \TRANSS{\al_0} \dots \TRANSS{\al_{n-1}} \aConf_{n}$ and
$\aConf_{n}$ is an accepting configuration.

% ******************************
% DEFINITIONS
% ******************************
The notion of \emph{realisability} and \emph{sound
  termination} (cf. \cref{def:langreal,def:termination} below) are
given in terms of the relation between the \emph{language} of the
global view and the one of a system of local views \quo{implementing}
it. 
Our notion of realisability considers languages over $\lset$ as sets
of traces of the distributed executions of some CFSMs, analogously to
\cite{aey03}.
\begin{definition}[Realisability]\label{def:langreal}
  A language $\rlang \subseteq \lset^\star$ is \emph{weakly
    realisable} if there is a communicating system $\aCS$ such that
  $\rlang = \langof{\aCS}$; when $\aCS$ is deadlock-free we say that
  $\rlang$ is \emph{safely realisable}.
  A set of pomsets $\aR$ is \emph{weakly \emph{(resp. \emph{safely})}
    realisable} if $\langof{\aR}$ is weakly (resp. safely) realisable.
\end{definition}

The notion of realisability is meaningful when pomsets are
\emph{well-formed} and \emph{complete}, namely when they yield a
proper match among receive and send events.

In general, safe realisability is not enough to rule out undesirable
designs.
In fact, it admits systems where participants cannot ascertain
termination and may be left waiting forever for some messages.
This may lead non-terminating participants to unnecessarily lock
resources once the coordination is completed.
We explain this considering \cref{fig:example:term1} which can be
interpreted as follows.
Participant $\p$ starts a transaction with $\q$ by sending message
$\msg[x]$.
Pomset $\apom_{\ref{fig:example:term1}_a}$ represents a scenario where
the transaction was started but neither committed nor aborted.
Pomset $\apom_{\ref{fig:example:term1}_b}$ represents a scenario
where the transaction started and eventually committed.
Yet, $\q$ is uncertain whether message $\msg[y]$ is going to be sent
or not and hence $\q$ be could locally decide to terminate immediately
after receiving $\msg[x]$ leaving $\p[c]$ waiting for message
$\msg[z]$.
However, depending on the application requirements, it may be the case
that termination awareness is important for $\q$ and not for $\p[c]$
because e.g., either $\p[c]$ is not \quo{wasting} resources or it is
immaterial that such resources are left locked.
To handle this limitation we introduce a novel termination condition,
which allows to specify the subset of participants that should be able
to identify when no further message can be exchanged.

\begin{definition}[Termination soundness]\label{def:termination}
  A participant $\p \in \ptpset$ is \emph{termination-unaware} in a
  system $\aCS$ if there exists an accepting configuration
  $\csconf {q} {\abuffer}$ reachable in $\aCS$ having a transition
  departing from $\vec q(\p)$ that is labelled in $\lset^{?}$.
  
  A set of participants $\ptpset' \subseteq \ptpset$
  is
  \emph{termination-aware} in a system $\aCS$ if there is 
  no $\p \in \ptpset$ that is termination-unaware in $\aCS$.
  A language $\rlang$ over $\lset$ is \emph{termination-sound} for
  $\ptpset' \subseteq \ptpset$ if $\rlang$ is safely realisable by a
  system for which $\ptpset'$ is termination-aware.
  A set of pomsets $\aR$ is \emph{termination-sound} for $\ptpset'$
  if $\langof{\aR}$ is termination-sound for $\ptpset'$.
\end{definition}

% ******************************
% VERIFICATION CONDITIONS
% ******************************
Realisability and termination soundness can be established by
analyzing verification conditions of the language.
In \cite{aey03} two closure conditions are introduced that entail weak
and safe realisability.
A word $\aW$ over $\lset$ is \emph{$\pset$-feasible} for
$\rlang \subseteq \lset^\star$ if
$\forall \p \in \ptpset \qst \exists \aW' \in \rlang \qst
\projword{\aW}{\p} = \projword{\aW'}{\p}$.
In~\cite{aey03}, a language $\rlang$ over the alphabet $\lset$
that enjoys the  following conditions
\begin{align*}
  \rlang \supseteq \{\aW \in \lset^\star \sst \aW \text{ well-formed, complete, and $\pset$-feasible for } \rlang\}
\end{align*}
is said\footnote{We stick with the terminology in \cite{aey03} where
  closure conditions are not given specific names.} to be \CC.
Intuitively, the closure condition \CC\ entails that $\rlang$ is
realisable by the set of participants performing the actions in
$\lset$: if each participant cannot tell apart a trace $\aW$ with one
of its expected executions (i.e., those in $\rlang$) then $\aW$ must
be in $\rlang$ or, in the terminology of~\cite{aey03}, $\aW$ is
\emph{implied}.
Closure condition \CC\ characterises the class of weakly realisable
languages over $\lset$.
\begin{theorem}[\cite{aey03}]
  A language $\rlang$ is weakly realisable if, and only if, $\rlang$ contains only
  well-formed and complete words and satisfies \CC.
\end{theorem}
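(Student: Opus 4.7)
The proof splits into the two implications, and the heart of the argument lies in a \emph{projection lemma} tying local executability of the $\projword{\aW}{\p}$'s to global executability of $\aW$.

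\textbf{($\Rightarrow$) Necessity.} Suppose $\rlang = \langof{\aCS}$ for some $\aCS$. First I would observe that every $\aW \in \rlang$ must be well-formed and complete: an accepting configuration requires all buffers $\abuffer_{\achan}$ to be empty, so every output event in $\aW$ has been consumed by some matching input, and the multiset semantics only fires a receive after a matching send. To establish \CC, I pick any well-formed, complete, $\pset$-feasible word $\aW$ and exhibit witnesses $\aW'_{\p} \in \rlang$ with $\projword{\aW}{\p} = \projword{\aW'_{\p}}{\p}$ for each $\p$. Since $\aW'_{\p} \in \langof{\aCS}$, the automaton $\aCM_{\p}$ admits $\projword{\aW'_{\p}}{\p}$, hence also $\projword{\aW}{\p}$, as a run that ends in an accepting state. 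I then invoke the projection lemma (below) to lift these local runs to a global run of $\aCS$ that emits exactly $\aW$; completeness gives empty terminal buffers, so $\aW \in \langof{\aCS} = \rlang$.

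\textbf{($\Leftarrow$) Sufficiency.} Given a language $\rlang$ of well-formed, complete words satisfying \CC, I construct the canonical system $\aCS$ by taking, for each $\p$, the minimal deterministic automaton $\aCM_{\p}$ recognising the projected language $\projword{\rlang}{\p} = \{\projword{\aW}{\p} \sst \aW \in \rlang\}$, with accepting states inherited from the projection. The containment $\rlang \subseteq \langof{\aCS}$ is again an application of the projection lemma: each $\aW \in \rlang$ projects to runs accepted by the $\aCM_{\p}$'s, and well-formedness/completeness guarantee a compatible global schedule. For the reverse containment $\langof{\aCS} \subseteq \rlang$, I note that any $\aW \in \langof{\aCS}$ is well-formed and complete (buffers start and end empty, and a receive never precedes the matching send) and that $\projword{\aW}{\p} \in \projword{\rlang}{\p}$ by construction, hence there exists $\aW'_{\p} \in \rlang$ with $\projword{\aW}{\p} = \projword{\aW'_{\p}}{\p}$. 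This is exactly $\pset$-feasibility, so \CC\ forces $\aW \in \rlang$.

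\textbf{The main obstacle: the projection lemma.} Both directions hinge on the following claim: if $\aW$ is well-formed and complete and each $\aCM_{\p}$ admits $\projword{\aW}{\p}$ as an accepting run, then $\aW$ is a trace of $\aCS$ reaching an accepting configuration. I would prove this by induction on the length of $\aW$, stepping through $\aW$ position by position and verifying that the candidate transition is enabled. For an output action $\aout$, enabling is immediate from $\aCM_{\p}$'s local run. For an input action $\ain$, the buffer precondition $\abuffer_{\achan}(\msg) > 0$ is exactly the global invariant ensured by well-formedness: every input is preceded in $\aW$ by a matching output that has not yet been consumed; here the choice of \emph{multiset} buffers (as opposed to FIFO) is crucial, since it removes any ordering constraint on messages in transit and allows the local runs to be merged into a single global schedule without extra reordering hypotheses. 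The fact that all $\aCM_{\p}$'s end in accepting states, combined with completeness emptying the buffers, yields an accepting terminal configuration, closing the induction.
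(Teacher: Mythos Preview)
The paper does not prove this theorem; it is quoted verbatim from~\cite{aey03} and no argument is supplied. So there is nothing in the paper to compare your proof against at the level of individual steps.

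That said, your outline is essentially the standard argument of~\cite{aey03}, and it is correct. A couple of remarks on alignment with what the paper \emph{does} record. First, the canonical realising system you build in the sufficiency direction is exactly the construction $\aCS(\rlang)$ the paper writes down just after \cref{thm:cc3}: the paper takes states to be prefixes of $\projword{\rlang}{\p}$ rather than a minimal DFA, but since all languages here are assumed finite this is immaterial. Second, your ``projection lemma'' is the right hinge, and your observation that multiset (rather than FIFO) buffers are what makes the lifting go through without extra reordering hypotheses matches the paper's explicit remark that it follows~\cite{aey03} in using multiset buffers.

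One small point worth tightening in the necessity direction: you assert that every $\aW \in \langof{\aCS}$ is well-formed, but ``well-formed'' in this paper means ``linearisation of a well-formed pomset'' (\cref{def:msc}), which includes the no-overtaking clause~(\ref{it:deg}). Your buffer-emptiness argument gives completeness and that each receive is preceded by a matching send, but you should say explicitly how the witnessing pomset is built (match the $i$-th $\aout$ with the $i$-th $\ain$ in $\aW$; this choice satisfies~(\ref{it:deg}) by construction). This is routine, but as written the step is asserted rather than shown.
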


The language of the set of pomsets $\{\apom_{\eqref{fig:example:msg}_a}, \apom_{\eqref{fig:example:msg}_b}\}$ of
\cref{fig:example:msg} is not closed under \CC.
In fact, the well-formed and complete word 
\begin{align}\label{eq:nocc2}
  \aout[\p][\q][][{\msg[x]}] ; \
  \ain[\p][\q][][{\msg[x]}] ; \
  \aout[D][\q][][{\msg[y]}] ; \
  \ain[D][\q][][{\msg[y]}] ; \
  \aout[D][C][][{\msg[y]}] ; \
  \ain[D][C][][{\msg[y]}] ; \
  \aout[\p][C][][{\msg[x]}] ; \
  \ain[\p][C][][{\msg[x]}]
\end{align}
satisfies the conditions of \CC,
because the
projection of the word \eqref{eq:nocc2} on each participant
equals the projection of a linearization of
$\apom_{\eqref{fig:example:msg}_a}$ or of $\apom_{\eqref{fig:example:msg}_b}$ on the same participant. 
However, \eqref{eq:nocc2}
  is not in
the language
$\rlang(\aR_{\eqref{fig:example:msg}})$, because
 $\ain[\p][C][][{\msg[x]}]$ must precede
  $\ain[D][C][][{\msg[y]}]$ in all the words obtained by the
  linearization of $\apom_{\eqref{fig:example:msg}_a}$, while in those obtained by a
  linearization of $\apom_{\eqref{fig:example:msg}_b}$, 
  $\ain[D][\q][][{\msg[y]}]$ must precede
  $\ain[\p][\q][][{\msg[x]}]$.

The realisability entailed by condition \CC\ is \quo{weak} because it
does not rule out possibly deadlocking systems.
Therefore, an additional closure condition, dubbed \CC[3], has
been
identified in~\cite{loh02,aey03}.
A language $\rlang$ over the alphabet $\lset$ has the closure
condition \CC[3] when
\begin{align*}
  \pref \supseteq \{\aW \in \lset^\star \sst \aW \text{ well-formed and $\pset$-feasible for } \pref \}
\end{align*}
where $\pref$ is the prefix closure of $\rlang$.
Basically, condition \CC[3] states that any (partial) execution that
cannot be told apart by any of the participants is a (partial)
execution in $\rlang$.  And now the following result characterises
safe realisability.
\begin{theorem}[\cite{loh02,aey03}]\label{thm:cc3}
  A language $\rlang$ is safe realisable if, and only if, $\rlang$
  contains only well-formed and complete words and satisfies \CC[2]
  and \CC[3] \footnote{The theorem in~\cite{aey03} describes a
    different condition, \CC[2'], which is easier to implement and is
    equivalent to \CC[2] when in conjunction with \CC[3]}.
\end{theorem}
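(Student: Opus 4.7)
The plan is to prove the two directions separately, reusing the characterisation of weak realisability via \CC[2]. For the forward direction, I would assume $\rlang = \langof{\aCS}$ for some deadlock-free communicating system $\aCS$. Well-formedness and completeness of the words of $\rlang$ follow from the semantics of $\aCS$: every accepting configuration has all buffers empty, so each output in an accepting run is matched by an input. Since safe realisability entails weak realisability, the previous characterisation immediately gives \CC[2]. For \CC[3], I would take a well-formed $\aW$ that is $\pset$-feasible for $\pref$ and build a run of $\aCS$ producing $\aW$: for each $\p$, by $\pset$-feasibility some $\aW_\p \in \pref$ satisfies $\projword{\aW_\p}{\p} = \projword{\aW}{\p}$, so $\aCM_\p$ admits a run over $\projword{\aW}{\p}$; by well-formedness of $\aW$, these local runs interleave into a global run reaching some configuration $\aConf$. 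Deadlock-freedom of $\aCS$ then extends $\aConf$ to an accepting configuration, witnessing $\aW \in \pref$.

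For the converse direction, I would construct $\aCS = (\aCM_{\p})_{\p \in \ptpset}$ by taking $\aCM_\p$ to be a deterministic automaton accepting $\projword{\rlang}{\p}$, with the accepting states exactly those reached by runs over $\{\projword{\aW}{\p} \mid \aW \in \rlang\}$. The proof that $\aCS$ safely realises $\rlang$ then splits into three subgoals. First, $\rlang \subseteq \langof{\aCS}$ is direct from the construction, since the asynchronous semantics lets the locally consistent runs be interleaved globally. Second, $\langof{\aCS} \subseteq \rlang$ uses \CC[2]: any accepting global run produces a word $\aW$ that is well-formed, complete (accepting configurations have empty buffers), and whose local projections agree with projections of $\rlang$-words, hence $\pset$-feasible for $\rlang$; by \CC[2], $\aW \in \rlang$. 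Third, deadlock-freedom of $\aCS$ follows from \CC[3]: every reachable configuration corresponds to a well-formed partial trace $\aW$ whose projections are prefixes of $\rlang$-projections, hence $\pset$-feasible for $\pref$; by \CC[3], $\aW \in \pref$, so there is an extension of $\aW$ to a word of $\rlang$ and a corresponding run of $\aCS$ reaching an accepting configuration.

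The main obstacle I anticipate is the interleaving step, invoked in both directions: whenever each $\aCM_\p$ admits a run over $\projword{\aW}{\p}$ and $\aW$ is well-formed, one must exhibit a global run of $\aCS$ over $\aW$. The delicate point is to verify that the asynchronous buffer semantics is well-defined along the whole interleaving, which hinges on well-formedness of $\aW$ forcing each output to occur before its matching input, so no input ever consumes from an empty buffer. A secondary subtlety lies in the precise choice of the accepting states of $\aCM_\p$: too few accepting states would make $\langof{\aCS}$ strictly smaller than $\rlang$, while too many would admit spurious accepting runs and break the inclusion $\langof{\aCS} \subseteq \rlang$ obtained via \CC[2].
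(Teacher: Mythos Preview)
The paper does not actually prove this theorem: it is quoted from \cite{loh02,aey03} and stated without proof. There is therefore nothing in the paper to compare your argument against. That said, your sketch is the standard proof strategy used in those references, and in particular the system you propose to build in the converse direction is precisely the system $\aCS(\rlang)$ that the paper introduces immediately after this theorem and whose correctness is recorded (again without proof) as \cref{thm:projection}. Your identification of the interleaving lemma and of the choice of accepting states as the two delicate points is accurate; both are handled explicitly in \cite{aey03}.
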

%

% ******************************
% PROJECTIONS
% ******************************
Once a language $\rlang$ is known to be realisable, we get a system
$\aCS(\rlang) = (\aCM_{\p})_{\p \in \ptpset}$ realising $\rlang$ by
defining, for all $\p \in \ptpset$
\begin{align*}
  \aCM_{\p} =
  (\pref[\projword{\rlang}{\p}],\epsilon,\projword{\rlang}{\p},\apomMtrans{})
\end{align*}
where $\aW \apomMtrans{\al} {\aW. \al}$ if
$\aW.\al \in \pref[\projword{\rlang}{\p}]$.
Then, in~\cite{aey03} the following result is shown.
\begin{theorem}[\cite{aey03}]\label{thm:projection}
  If $\rlang$ is a weakly realisable language then
  $\langof{\aCS(\rlang)} = \rlang$.
  Moreover, if $\rlang$ is safely realisable then
  $\aCS(\rlang)$ is deadlock-free.
\end{theorem}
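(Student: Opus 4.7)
The plan is to prove the two set inclusions $\rlang \subseteq \langof{\aCS(\rlang)}$ and $\langof{\aCS(\rlang)} \subseteq \rlang$ separately to obtain the equality, then use closure condition \CC[3] for the deadlock-freedom part. Throughout, the key observation is that by construction each local machine $\aCM_{\p}$ reaches the accepting state $\projword{\rlang}{\p}$ precisely after consuming a word in $\projword{\rlang}{\p}$, and that its transition relation is purely the ``follow a prefix'' relation.

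For $\rlang \subseteq \langof{\aCS(\rlang)}$, I would take $\aW \in \rlang$ and simulate it through $\aCS(\rlang)$ step by step. Since weak realisability presupposes $\aW$ is well-formed and complete, every input event in $\aW$ is preceded by its matching output; therefore outputs always fire (they merely push onto an unbounded multiset buffer), and inputs are enabled because their matching message is guaranteed to be in the buffer by the time the input is scheduled. Each local machine follows $\projword{\aW}{\p}$, which lies in $\projword{\rlang}{\p}$ and is therefore an accepting state. Completeness of $\aW$ ensures all buffers are emptied at the end, so the final configuration is accepting and $\aW \in \langof{\aCS(\rlang)}$.

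For $\langof{\aCS(\rlang)} \subseteq \rlang$, I would take $\aW \in \langof{\aCS(\rlang)}$. The operational semantics of communicating systems forces $\aW$ to be well-formed, because inputs can fire only if the matching message has been previously sent, and complete, because the accepting configuration requires empty buffers. Since each machine ends in an accepting state, $\projword{\aW}{\p} \in \projword{\rlang}{\p}$ for each $\p \in \ptpset$; hence $\aW$ is $\ptpset$-feasible for $\rlang$. Weak realisability of $\rlang$ entails \CC\ (by the theorem of~\cite{aey03} recalled above), so $\aW \in \rlang$.

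For the deadlock-freedom claim, assume $\rlang$ is safely realisable and consider any configuration $\aConf$ reachable in $\aCS(\rlang)$ via some trace $\aW'$. By construction each $\projword{\aW'}{\p}$ is in $\pref[\projword{\rlang}{\p}]$, and $\aW'$ is well-formed by the semantics, so $\aW'$ is well-formed and $\ptpset$-feasible for $\pref[\rlang]$. Safe realisability yields \CC[3] via \cref{thm:cc3}, hence $\aW' \in \pref[\rlang]$, so $\aW'$ extends to some $\aW = \aW' \aW'' \in \rlang$. Executing $\aW''$ from $\aConf$ leads to an accepting configuration by the first inclusion, whence no reachable configuration is a deadlock. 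The main obstacle is the last step: one must verify that the suffix $\aW''$ is executable starting from $\aConf$ (rather than from the initial configuration), which hinges on noting that the buffer contents at $\aConf$ record exactly the outputs of $\aW'$ not yet consumed, and these are precisely the messages that the remaining receives in $\aW''$ will need to match.
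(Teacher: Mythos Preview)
The paper does not give its own proof of this theorem: it is quoted verbatim as a result of~\cite{aey03} (and~\cite{loh02}) and used as a black box, so there is no in-paper argument to compare against. Your sketch follows the standard line of the original proof in~\cite{aey03}: the inclusion $\rlang \subseteq \langof{\aCS(\rlang)}$ by replaying a word through the projections, the reverse inclusion via \CC, and deadlock-freedom via \CC[3] by extending any reachable prefix to a full word and replaying the suffix. That is the expected argument and it is sound at the level of detail you give; the only point worth tightening is the well-formedness of an arbitrary $\aW \in \langof{\aCS(\rlang)}$, where with multiset buffers you should note that a suitable output/input matching (hence a well-formed pomset linearising to $\aW$) can always be \emph{chosen}, since identically labelled messages are indistinguishable in the buffer.
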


% ******************************
% CONDITION TERMINATION
% ******************************
\newcommand{\langTerm}[1][\ptpset']{
\mathcal{T}_{#1}
}
\begin{figure*}[t!]
    \centering
    \begin{subfigure}[b]{0.2\textwidth}
        \centering
        \begin{tikzpicture}[scale=1, every node/.style = {rectangle,draw}, transform shape]\tiny
        \node (a1) at (0,0) {$\aout[@][@][][{\msg[x]}]$};
        \node (b1) at (1.5,0) {$\ain[@][@][][{\msg[x]}]$};
        \path[->,draw] (a1) -- (b1);
      \end{tikzpicture}
      \caption{$\apom_{\ref{fig:example:term1}_a}$}
    \end{subfigure}%
    ~
    \begin{subfigure}[b]{0.3\textwidth}
      \centering
      \begin{tikzpicture}[scale=1, every node/.style = {rectangle,draw}, transform shape]\tiny
        \node (a1) at (0,0) {$\aout[@][@][][{\msg[x]}]$};
        \node (a2) at (0,-1) {$\aout[@][@][][{\msg[y]}]$};
        \node (b1) at (1.5,0) {$\ain[@][@][][{\msg[x]}]$};
        \node (b2) at (1.5,-1) {$\ain[@][@][][{\msg[y]}]$};
        \node (b3) at (1.5,-2) {$\aout[B][C][][{\msg[z]}]$};
        \node (c1) at (3,-2) {$\ain[B][C][][{\msg[z]}]$};
        \path[->,draw] (a1) -- (a2);
        \path[->,draw] (a1) -- (b1);
        \path[->,draw] (a2) -- (b2);
        \path[->,draw] (b1) -- (b2);
        \path[->,draw] (b2) -- (b3);
        \path[->,draw] (b3) -- (c1);
      \end{tikzpicture}
      \caption{$\apom_{\ref{fig:example:term1}_b}$}
    \end{subfigure}
    \caption{\label{fig:example:term1}A set of two pomsets that is not
      termination sound for $\q$ or $\ptp[C]$}
\end{figure*}
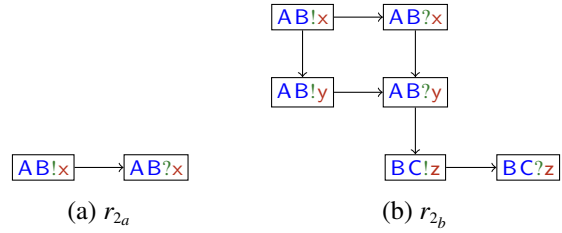

We introduce a new verification condition for
termination soundness.
A participant $\p \in \ptpset$ is \emph{termination-unaware}
for the language  $\rlang$ over $\lset$ 
if there exist $\aW,\aW' \in \rlang$
such that
$\projword{\aW}{\p}$ is a
prefix of $\projword{\aW'}{\p}$ and the first symbol in
$\projword{\aW'}{\p}$ after $\projword{\aW}{\p}$ is in $\lset^{?}$.
Given a set of participants $\ptpset' \subseteq \ptpset$, 
we say that $\rlang$ is $\ptpset'$\emph{-terminating} when there is no
$\p \in \ptpset'$ termination-unaware for $\rlang$.
The language of the family of pomsets
$\{\apom_{\ref{fig:example:term1}_a},
\apom_{\ref{fig:example:term1}_b}\}$ of \cref{fig:example:term1}
is $\{\p\}$-terminating.
However, such language is not $\{\q\}$-terminating.
In fact, after receiving the message $\ain[@][@][][{\msg[x]}]$,
participant $\q$ cannot distinguish whether $\p$ terminates or will
send $\aout[@][@][][{\msg[y]}]$; hence $\q$ ends up in a state where
it is ready to fire the input $\ain[\p][\q][][{\msg[y]}]$, but no
matching output could arrive from $\p$.
And likewise for $\ptp[c]$.

\begin{theorem}
  For $\ptpset' \subseteq \ptpset$,
  if $\rlang$ is $\ptpset'$-terminating and safely realisable
  then it is termination-sound for $\ptpset'$.
\end{theorem}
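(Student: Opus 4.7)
The plan is to contrapose the condition of termination-unawareness on the system side (\cref{def:termination}) with the language-theoretic notion introduced just before the statement, and to let the canonical system $\aCS(\rlang)$ of \cref{thm:projection} do the heavy lifting.

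First I would invoke \cref{thm:projection}: since $\rlang$ is safely realisable, $\aCS(\rlang)$ is a deadlock-free system with $\langof{\aCS(\rlang)} = \rlang$. It therefore suffices to show that $\ptpset'$ is termination-aware in this particular system. I would then proceed by contradiction: assume some $\p \in \ptpset'$ is termination-unaware in $\aCS(\rlang)$, so there is a reachable accepting configuration $\csconf{q}{\abuffer}$ of $\aCS(\rlang)$ together with a transition $\vec{q}(\p) \trans{\al} q'$ in $\aCM_{\p}$ with $\al \in \lset^{?}$.

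Next, let $\aW \in \lset^{\star}$ be a trace witnessing reachability of this configuration; since $\vec q$ is accepting and all buffers are empty, $\aW \in \langof{\aCS(\rlang)} = \rlang$. By the definition of $\aCM_\p$ in the construction before \cref{thm:projection}, the states of $\aCM_\p$ are prefixes of projections of words in $\rlang$, and the state reached after reading $\projword{\aW}{\p}$ is the word $\projword{\aW}{\p}$ itself. Hence $\vec{q}(\p) = \projword{\aW}{\p}$, and the existence of the transition labelled $\al$ means exactly that $\projword{\aW}{\p}.\al \in \pref[\projword{\rlang}{\p}]$, i.e.\ there is some $\aW' \in \rlang$ whose projection on $\p$ extends $\projword{\aW}{\p}.\al$. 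Together with $\al \in \lset^{?}$, this exhibits the pair $\aW, \aW'$ witnessing that $\p$ is termination-unaware for $\rlang$, contradicting the assumption that $\rlang$ is $\ptpset'$-terminating.

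The argument is largely bookkeeping once the canonical construction is in place; the only point that deserves care is step (5), namely justifying that $\vec q(\p)$ really coincides with $\projword{\aW}{\p}$ rather than with some arbitrary equivalent state. This follows from the fact that in $\aCS(\rlang)$ local states \emph{are} prefixes of projections and local transitions simply concatenate the new action, so that the state of $\aCM_\p$ after any interleaved global run $\aW$ is uniquely determined as $\projword{\aW}{\p}$. I do not foresee other non-routine steps.
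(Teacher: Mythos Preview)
Your proposal is correct and follows essentially the same route as the paper: both use the canonical system $\aCS(\rlang)$ from \cref{thm:projection} and exploit that local states of $\aCM_\p$ are literally prefixes of $\projword{\rlang}{\p}$, so an input transition out of an accepting local state immediately yields a witnessing pair $\aW,\aW'$ for language-level termination-unawareness. The only difference is that you argue by contradiction while the paper argues directly, and you spell out the identification $\vec q(\p)=\projword{\aW}{\p}$ that the paper leaves to ``by construction''.
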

\begin{proof}
  The proof is trivial.
  Let $\aCS(\rlang)$ be the system
   obtained from the construction of \cref{thm:projection}.  $\aCS(\rlang)$ is deadlock-free and $\rlang = \langof{\aCS(\rlang)}$.
   Let $\p \in \ptpset'$, $\aW \in \rlang$, and $\aConf$ an accepting
   configuration reached in a run of $\aCS$ corresponding to $\aW$.
   For each $\aW' \in \rlang$ such that $\projword{\aW}{\p}$ is prefix
   of $\projword{\aW'}{\p}$, the first symbol in $\projword{\aW'}{\p}$
   after $\projword{\aW}{\p}$ cannot be an input (since $\rlang$ is
   $\ptpset'$-terminating).  Therefore, by construction of
   $\aCS(\rlang)$, there is no input transition departing from the
   local state of $\p$ in $\aConf$.
\end{proof}

%%% Local Variables:
%%% mode: latex
%%% TeX-master: "main"
%%% End:

\section{Pomset based verification conditions}
\label{sec:pomsets}
\newcommand{\lineariz}[1][\p]{\ell_{#1}}

We introduce a different approach to check realisability and sound
termination of specifications, which does not require to explicitly
compute the language of the family of pomsets.  This allows us to
avoid the combinatorial explosion due to interleavings.
The main strategy is to provide alternative definitions of closures
directly on pomsets which handle both \emph{intra-} and
\emph{inter-participant} concurrency.
Besides theoretical benefits, this yields a clear advantage for
practitioners.
In fact, design errors can be identified and confined in more abstract
models, closer to the global specification than to traces of
execution.
Also, our verification conditions require to analyze sets of pomsets; therefore,
they are syntax-oblivious.
As discussed in \cref{sec:implementation}, our conditions strictly entail
the corresponding ones in \cref{sec:realisability}

\newcommand{\normalform}{\square}
\begin{definition}[Closure]\label{def:inter-closure}
  Let $\rho$ be a function from $\ptpset$ to pomsets and
  $(\apom^{\p})_{\p \in \ptpset}$ be the tuple where $r^{\p} = \projpom
  {\rho(\p)} \p$ for all $\p \in \ptpset$.
  The inter-participant closure
  $\normalform((\apom^{\p})_{\p \in \ptpset})$ is the set of all
  well-formed pomsets
  $ \left [ \bigcup_{\p \in \ptpset} \esetof{\apom^{\p}}, \quad
    \leqof{I} \cup \bigcup_{\p \in \ptpset} \leqof{\apom^{\p}}, \quad
    \bigcup_{\p \in \ptpset} \alfof{\apom^{\p}} \right ] $ where
  $\leqof{I} \subseteq \{ (\ae^{\p}, \ae^{\q}) \in \esetof{
    {\apom^{\p}} } \times \esetof{ {\apom^{\q}} }, \p,\q \in \ptpset
  \sst \alfof{\apom^{\p}}(\ae^{\p}) = \aout,
  \alfof{\apom^{\q}}(\ae^{\q}) = \ain \}$.
\end{definition}
Informally, the inter-participant closure takes one pomset for every
participant and generates all \quo{acceptable} matches between output
and input events.
We use \cref{fig:ex:2-1} and \cref{fig:ex:2-2} to illustrate the
inter-participant closure.
The singleton $\aR_{\eqref{fig:ex:2-1}}$ contains one pomset that is
the composition of two independent pomsets:
$\apom_{\eqref{fig:ex:2-1}_a}$ and $\apom_{\eqref{fig:ex:2-1}_b}$.
Intuitively, this represents two concurrent \quo{threads} (hereafter
left and right threads) that have no interdependencies.
Let $\apom^{\p}$ be the projection of the single pomset in
$\aR_{\eqref{fig:ex:2-1}}$ for $\p \in \ptpset$, then the
inter-participant closure of $(\apom^{\p})_{\p \in \ptpset}$ consists
of the two well-formed pomsets of \cref{fig:ex:2-2}, the one that uses
the black and green dependencies, and the one that uses the black and
red dependencies.
Notice that the order $\leqof{I}$ in \cref{def:inter-closure} is a
subset of the product of outputs and matching inputs and this the
closure to contain  only well-formed pomsets.
For example, the closure of $\aR_{\eqref{fig:ex:2-1}}$ does not contain
the pomset having both green and red arrows.

\begin{figure}
  \center
  \begin{tikzpicture}[node distance = .5cm, scale=1, every node/.style = {rectangle,draw}, transform shape]\tiny
    \node (a1) at (0,0) {$\aout[@][C][][{\msg[l1]}]$};
    \node (a2) at (0,-2) {$\aout[@][@][][{\msg[x]}]$};
    \node (b1) at (1.5,-1) {$\aout[b][c][][{\msg[l2]}]$};
    \node (b2) at (1.5,-2) {$\ain[@][@][][{\msg[x]}]$};
    \node (b3) at (1.5,-3) {$\aout[b][c][][{\msg[l3]}]$};
    \node (c2) at (3,-1) {$\ain[b][c][][{\msg[l2]}]$};
    \node (c1) at (3,0) {$\ain[a][c][][{\msg[l1]}]$};
    \node (c3) at (3,-3) {$\ain[@][C][][{\msg[l3]}]$};
    \path[->,draw] (a1) -- (a2);
    \path[->,draw] (b1) -- (b2);
    \path[->,draw] (b2) -- (b3);
    \path[->,draw] (c1) -- (c2);
    \path[->,draw] (c2) -- (c3);
    \path[->,draw] (a1) -- (c1);
    \path[->,draw] (b1) -- (c2);
    \path[->,draw] (a2) -- (b2);
    \path[->,draw] (b3) -- (c3);
    \node (a1) at (4.5,0) {$\aout[@][C][][{\msg[r1]}]$};
    \node (a2) at (4.5,-2) {$\aout[@][@][][{\msg[x]}]$};
    \node (b1) at (6,-1) {$\aout[b][c][][{\msg[r2]}]$};
    \node (b2) at (6,-2) {$\ain[@][@][][{\msg[x]}]$};
    \node (b3) at (6,-3) {$\aout[b][c][][{\msg[r3]}]$};
    \node (c2) at (7.5,-1) {$\ain[b][c][][{\msg[r2]}]$};
    \node (c1) at (7.5,0) {$\ain[a][c][][{\msg[r1]}]$};
    \node (c3) at (7.5,-3) {$\ain[@][C][][{\msg[r3]}]$};
    \path[->,draw] (a1) -- (a2);
    \path[->,draw] (b1) -- (b2);
    \path[->,draw] (b2) -- (b3);
    \path[->,draw] (c1) -- (c2);
    \path[->,draw] (c2) -- (c3);
    \path[->,draw] (a1) -- (c1);
    \path[->,draw] (b1) -- (c2);
    \path[->,draw] (a2) -- (b2);
    \path[->,draw] (b3) -- (c3);
  \end{tikzpicture}
  \caption{$\aR_{\eqref{fig:ex:2-1}} = \{
    [
    \esetof{\apom_{\eqref{fig:ex:2-1}_a}} \cup \esetof{\apom_{\eqref{fig:ex:2-1}_b}},
    \leqof{\apom_{\eqref{fig:ex:2-1}_a}} \cup \leqof{\apom_{\eqref{fig:ex:2-1}_b}},
    \alfof{\apom_{\eqref{fig:ex:2-1}_a}} \cup \alfof{\apom_{\eqref{fig:ex:2-1}_b}}
    ]
    \}$}
  \label{fig:ex:2-1}
\end{figure}

\begin{figure}
  \center
  \begin{tikzpicture}[scale=1, every node/.style = {rectangle,draw}, transform shape]\tiny
    \node (a1) at (0,0) {$\aout[@][C][][{\msg[l1]}]$};
    \node (a2) at (0,-2) {$\aout[@][@][][{\msg[x]}]$};
    \node (a1') at (1.2,0) {$\aout[@][C][][{\msg[r1]}]$};
    \node (a2') at (1.2,-2) {$\aout[@][@][][{\msg[x]}]$};
    \node (b1) at (3,0) {$\aout[b][c][][{\msg[l2]}]$};
    \node (b2) at (3,-1) {$\ain[@][@][][{\msg[x]}]$};
    \node (b3) at (3,-2) {$\aout[b][c][][{\msg[l3]}]$};
    \node (b1') at (4.2,0) {$\aout[b][c][][{\msg[r2]}]$};
    \node (b2') at (4.2,-1) {$\ain[@][@][][{\msg[x]}]$};
    \node (b3') at (4.2,-2) {$\aout[b][c][][{\msg[r3]}]$};
    \node (c2) at (6,-1) {$\ain[b][c][][{\msg[l2]}]$};
    \node (c1) at (6,0) {$\ain[a][c][][{\msg[l1]}]$};
    \node (c3) at (6,-2.5) {$\ain[@][C][][{\msg[l3]}]$};
    \node (c2') at (7.2,-1) {$\ain[b][c][][{\msg[r2]}]$};
    \node (c1') at (7.2,0) {$\ain[a][c][][{\msg[r1]}]$};
    \node (c3') at (7.2,-2.5) {$\ain[@][C][][{\msg[r3]}]$};
    \path[->,draw] (a1) -- (a2);
    \path[->,draw] (a1') -- (a2');
    \path[->,draw] (b1) -- (b2);
    \path[->,draw] (b2) -- (b3);
    \path[->,draw] (b1') -- (b2');
    \path[->,draw] (b2') -- (b3');
    \path[->,draw] (c1) -- (c2);
    \path[->,draw] (c2) -- (c3);
    \path[->,draw] (c1') -- (c2');
    \path[->,draw] (c2') -- (c3');
    \path[->,draw] (a1) edge [bend left=12] (c1);
    \path[->,draw] (a1') edge [bend left=15] (c1');
    \path[->,draw] (b1) .. controls (4.5,-.7) .. (c2);
    \path[->,draw] (b1') .. controls (5,0) .. (c2');
    \path[->,draw] (b3) .. controls (4.5,-2.5) .. (c3);
    \path[->,draw] (b3') .. controls (5.7,-2) .. (c3');
    \path[->,draw,\newgreen] (a2) .. controls (1,-1.5) .. (b2);
    \path[->,draw,\newgreen] (a2'.east) .. controls (2,-1.5) .. (b2');
    \path[->,draw,red] (a2.south east) .. controls (3,-3) .. (b2'.south);
    \path[->,draw,red] (a2') .. controls (2,-1.5) .. (b2);
  \end{tikzpicture}
\caption{Inter-participant closure of pomset of \cref{fig:ex:2-1}}
\label{fig:ex:2-2}
\end{figure}
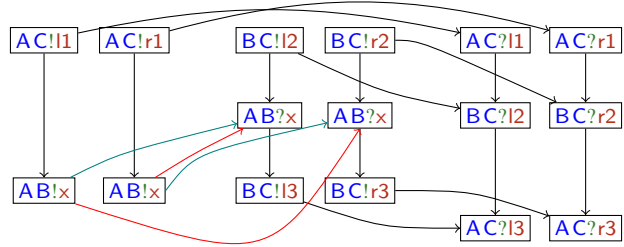

\begin{definition}
A \emph{pomset $\apom$ is \emph{less permissive} than pomset
  $\apom'$} (or \emph{$\apom'$ is more permissive than $\apom$},
written $\apom \rsubtype \apom'$) when
$\esetof{\apom} = \esetof{\apom'}$, $\alfof{\apom} = \alfof{\apom'}$,
and $\leqof{\apom} \supseteq \leqof{\apom'}$.
\end{definition}
\begin{lemma}
If $\apom \rsubtype \apom'$ then $\langof{\apom} \subseteq
\langof{\apom'}$. 
\end{lemma}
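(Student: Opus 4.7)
The plan is to unfold the definitions and observe that adding order constraints only restricts the set of legal linearizations. Concretely, fix representative lposets for $\apom$ and $\apom'$ that share the underlying event set $\esetof{\apom} = \esetof{\apom'}$ and labelling $\alfof{\apom} = \alfof{\apom'}$, and satisfy $\leqof{\apom} \supseteq \leqof{\apom'}$. Such representatives exist by the very hypothesis $\apom \rsubtype \apom'$ (and any choice is fine since both $\rlang(\apom)$ and $\rlang(\apom')$ are invariant under pomset isomorphism, being defined via labels of total orderings of the events).

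Next, take an arbitrary $\aW \in \langof{\apom}$. By \cref{def:pomsets} and the definition of linearization, there is a permutation $\ae_1, \ldots, \ae_n$ of the events in $\esetof{\apom}$ such that $\aW = \alfof{\apom}(\ae_1) \cdots \alfof{\apom}(\ae_n)$ and whenever $\ae_i \leqof{\apom} \ae_j$ we have $i \leq j$. Because $\leqof{\apom'} \subseteq \leqof{\apom}$, the same indexing also satisfies: if $\ae_i \leqof{\apom'} \ae_j$ then $\ae_i \leqof{\apom} \ae_j$ and hence $i \leq j$. So the very same total order witnesses a linearization of $\apom'$; using $\alfof{\apom} = \alfof{\apom'}$, this linearization spells out the same word $\aW$, giving $\aW \in \langof{\apom'}$.

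Since $\aW$ was arbitrary, $\langof{\apom} \subseteq \langof{\apom'}$. The proof is essentially a definition-chase; the only mild subtlety is that pomsets are isomorphism classes, so one must check (trivially) that $\rsubtype$ and $\rlang(\cdot)$ are well-defined on classes. There is no real obstacle: the monotonicity of the set of linearizations with respect to removing order constraints is immediate.
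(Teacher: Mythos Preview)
Your argument is correct: you fix representatives sharing the same event set and labelling, take a linearization of $\apom$, and observe that compatibility with the larger order $\leqof{\apom}$ trivially entails compatibility with the smaller order $\leqof{\apom'}$, so the same word lies in $\langof{\apom'}$. The paper itself states this lemma without proof, treating it as immediate from the definitions; your write-up simply spells out the routine details (including the remark that $\rsubtype$ and $\rlang(\cdot)$ are well-defined on isomorphism classes), so there is nothing to compare beyond noting that you have made explicit what the paper leaves to the reader.
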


\begin{definition}[{\CCP[2]}]\label{def:ccp2}
  A set of pomsets $\aR$ over $\lset$ satisfies closure condition
  \CCP[2] if for all tuples $(\apom^{\p})_{\p \in \ptpset}$ of pomsets
  of $\aR$, for every pomset
  $\apom \in \normalform((\projpom{\apom^{\p}}{\p})_{\p \in
    \ptpset})$, there exists $\apom' \in \aR$ such that
  $\apom \sqsubseteq \apom'$.
\end{definition}
Intuitively, \cref{def:ccp2} requires that if all the possible
executions of a pomset cannot be distinguished by any of the
participants of $\aR$, then those executions must be part of the
language of $\aR$.
\cref{thm:ccp2vscc2} below shows that \CCP[2] entails \CC[2]; its
proof is based on \quo{counting} the number of events with a certain
label $\al$ preceding an event $\ae$ in the order $\leqof{\apom}$ of a
pomset $\apom$: we write $\countEvents$ for such number (namely,
$\countEvents$ is the cardinality of
$\{\ae' \in \esetof \apom \sst \ae' \leqof \apom \ae \land \alfof
\apom(\ae') = \al\}$).
\begin{theorem}\label{thm:ccp2vscc2}
  If $\aR$ satisfies \CCP[2] then $\langof \aR$ satisfies \CC[2].
\end{theorem}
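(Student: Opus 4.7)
The plan is to show that any word witnessing \CC[2] can be obtained as a linearization of a pomset in the inter-participant closure of projections of pomsets in $\aR$, and then invoke \CCP[2] together with the monotonicity lemma $\apom \sqsubseteq \apom' \Rightarrow \langof{\apom} \subseteq \langof{\apom'}$. So, fix a well-formed and complete word $\aW \in \lset^\star$ that is $\pset$-feasible for $\rlang = \langof{\aR}$; we must show $\aW \in \rlang$. By $\pset$-feasibility, for each $\p \in \ptpset$ we pick $\aW_\p \in \rlang$ with $\projword{\aW_\p}{\p} = \projword{\aW}{\p}$, and let $\apom^\p \in \aR$ be such that $\aW_\p$ is a linearization of $\apom^\p$.

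The first main step is to define a pomset $\apom$ in $\normalform\big((\projpom{\apom^\p}{\p})_{\p \in \ptpset}\big)$ whose linearizations include $\aW$. The events and labels of $\apom$ are given by $\bigcup_\p \esetof{\projpom{\apom^\p}{\p}}$, with the internal order $\bigcup_\p \leqof{\projpom{\apom^\p}{\p}}$ already fixed by the projections. The inter-participant order $\leqof{I}$ has to match sends with receives across participants; the idea is to use $\countEvents$ to do the matching. Since $\projword{\aW}{\p} = \projword{\aW_\p}{\p}$ is a linearization of $\projpom{\apom^\p}{\p}$, the events of $\projpom{\apom^\p}{\p}$ appear in $\projword{\aW}{\p}$ in an order consistent with $\leqof{\projpom{\apom^\p}{\p}}$; this lets us bijectively identify the $k$-th occurrence of any label in $\projpom{\apom^\p}{\p}$ with the $k$-th occurrence of that label in $\aW$. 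Well-formedness and completeness of $\aW$ then pair each such send event in $\projpom{\apom^\p}{\p}$ with a unique receive event in $\projpom{\apom^\q}{\q}$, and setting $\leqof{I}$ equal to exactly these pairs gives a well-formed pomset by construction, i.e.\ an element $\apom$ of the inter-participant closure.

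The second step is to check that $\aW$ is a linearization of $\apom$: the local orders $\leqof{\projpom{\apom^\p}{\p}}$ are respected because the projections of $\aW$ agree with those of $\aW_\p$, and the cross-participant pairs in $\leqof{I}$ are respected because $\aW$ is well-formed (each send precedes its matching receive). Applying \CCP[2] to $\apom$ yields some $\apom' \in \aR$ with $\apom \sqsubseteq \apom'$; by the preceding lemma, $\langof{\apom} \subseteq \langof{\apom'} \subseteq \langof{\aR}$, hence $\aW \in \rlang$, establishing \CC[2].

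The step I expect to be the main technical obstacle is the construction of the matching relation $\leqof{I}$: one must argue carefully, using $\countEvents$, that the $k$-th send with a given label in $\projpom{\apom^\p}{\p}$ can be unambiguously matched (via the position of the corresponding event in $\aW$) with the $k$-th matching receive in $\projpom{\apom^\q}{\q}$, and that the resulting relation produces a well-formed pomset — in particular that condition~\eqref{it:deg} of \cref{def:msc} is preserved, because without it the matching could cross over and violate well-formedness. Once this bookkeeping is in place, the reduction to \CCP[2] and the monotonicity lemma is routine.
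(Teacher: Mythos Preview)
Your proposal is correct and follows essentially the same approach as the paper: select witnessing pomsets $\apom^{\p}\in\aR$ for each participant, build $\apom$ in the inter-participant closure by defining $\leqof{I}$ via the $\countEvents$-based matching of the $k$-th send to the $k$-th receive, observe that $\aW\in\langof{\apom}$, and then apply \CCP[2] together with the monotonicity lemma. Your identification of the main technical obstacle (verifying that the $\countEvents$ matching yields a well-formed pomset, in particular condition~\eqref{it:deg}) is also on point.
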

\begin{proof}
  Let $\aW$ be a well-formed and complete word over $\lset$ that
  satisfies hypothesis of \CC[2]: for every participant
  $\p \in \ptpset$ there exists $\aW^{\p} \in \langof \aR$ for which
  $\projword{\aW}{\p} = \projword{\aW^{\p}}{\p}$.
  Then, for each $\p \in \ptpset$, there is a pomset
  $\apom^{\p} \in \aR$ such that a linearization $\lineariz$ of
  $\apom^{\p}$ yields $\aW^{\p}$.
  We can hence take the pomset
  \begin{align*}
    r= \left [
    \bigcup_{\p \in \ptpset} \esetof{\projpom{\apom^{\p}}{\p}},
    \quad
    \leqof{I} \cup
    \bigcup_{\p \in \ptpset}
    \leqof{\projpom{\apom^{\p}}{\p}},
    \quad
    \bigcup_{\p \in \ptpset} \alfof{\projpom{\apom^{\p}}{\p}}
    \right ] 
  \end{align*}
  where
  \begin{align*}
\leqof{I}  = 
    \bigcup_{\q \neq \p \in \ptpset}
    \left \{
    (\ae^{\p}, \ae^{\q})
    \in \esetof{
    \projpom{\apom^{\p}}{\p}
    } \times 
    \esetof{
    \projpom{\apom^{\q}}{\q}
    } \sst
    \begin{array}{l}
      \alfof{\apom^{\p}}(\ae^{\p}) = \aout 
      \textit{ and }\alfof{\apom^{\q}}(\ae^{\q}) = \ain \\
      \textit{and }\countEvents[\lineariz][\ae^{\p}][\aout] =
      \countEvents[{\lineariz[\q]}][\ae^{\q}][\ain]
    \end{array}
    \right \}
  \end{align*}
The pomset $\apom$ is in
$\normalform((\projpom{\apom^{\p}}{\p})_{\p \in \ptpset})$,
since it is well-formed and complete and
$\leqof{I}$ satisfies conditions of Definition~\ref{def:inter-closure}.
In fact, since $\aW$
is well-formed and complete, all send and receive events have
corresponding matching events.
Also by construction, $\aW \in \langof{\apom}$ and, for every $\p$,
$\projpom{\apom}{\p} \rsubtype \projpom{\apom^{\p}}{\p}$.
Finally, by \CCP[2] there exists $\apom' \in \aR$ such that
$\apom \sqsubseteq \apom'$, therefore $\aW \in \langof{\apom'}$ hence
$\aW \in \langof{\aR}$.
\end{proof}

\cref{fig:ex:2-1} provides an example of a family of pomsets that cannot
be weakly realised.
An execution of this specification can be as follows:
\begin{enumerate}
\item the left thread of $\p$ executes $\aout[\p][C][][{\msg[l_1]}]$
  and $\aout[\p][\q][][{\msg[x]}]$
\item the right thread of $\q$ executes $\aout[\q][C][][{\msg[r_2]}]$
  and $\ain[\p][\q][][{\msg[x]}]$, \quo{stealing} the message
  $\msg[x]$ generated by the left thread of $\p$ and meant for the
  left thread of $\q$
\item the right thread of $\q$ executes
  $\aout[\q][C][][{\msg[r_3]}]$.
\end{enumerate}
This violates the constraint that event $\aout[\p][c][][{\msg[r_1]}]$
must always precede event $\aout[\q][C][][{\msg[r_3]}]$, which the
specification imposes independently of the interleaved execution of
the participants' threads.
Indeed, $\aR_{\eqref{fig:ex:2-1}}$ does not satisfy \CCP[2].
In fact, there are two well-formed and complete pomsets that satisfy
the hypothesis of \CCP[2]: the pomset of \cref{fig:ex:2-2} that uses
the black and green dependencies, and the one that uses the black and
red dependencies.
Condition \CCP[2] is violated because there is no pomset in
$\aR_{\eqref{fig:ex:2-1}}$ that is more permissive than the pomset
using the red dependencies.

The next condition requires to introduce the concept of \emph{prefix}
of a pomset $\apom$, which is a pomset $\apom'$ on a subset of the
events of $\apom$ that preserves the order and labelling of $\apom$;
formally (following~\cite{katoen1998pomsets})
\begin{definition}[Prefix pomsets]\label{def:prefixpomset}
  A pomset $\apom'=[\eset', \leq', \alf']$ is a \emph{prefix} of
  pomset $\apom=[\eset, \leq, \alf]$ if there exists a label
  preserving injection $\phi:\eset' \rightarrow \eset$ such that
  $\phi(\leq') = \leq \cap (\eset \times \phi(\eset'))$
\end{definition}
We remark that an arbitrary sub-pomset satisfies the weaker condition
$\phi(\leq') = \leq \cap (\phi(\eset') \times \phi(\eset'))$.
Instead, $\phi(\leq') = \leq \cap (\eset \times \phi(\eset'))$
prevents events in
$\eset \setminus \phi(\eset')$ from preceding events in $\phi(\eset')$ and it is
equivalent to say that for all $\ae' \in \eset'$ if there is
$\ae \leq \phi(\ae')$ then there exists $\ae'' \in \eset'$ such that $\phi(\ae'')=\ae$ and $\ae'' \leq' \ae'$.

\begin{lemma}
  Let $\apom$ be a pomset over $\lset$ and $\aW$ be a word in
  $\lset^{\star}$,
  $\aW \in \pref[{\langof{\apom}}]$
  if, and only if,
  there exists a prefix $\apom'$ of $\apom$ such that
  $\aW \in \langof{\apom'}$.
\end{lemma}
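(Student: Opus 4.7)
The plan is to prove the two implications separately by relating linearizations of $\apom$ to linearizations of its prefixes, exploiting the strong condition $\phi(\leq') = \leq \cap (\eset \times \phi(\eset'))$ of \cref{def:prefixpomset}, which is precisely what guarantees the compatibility between \quo{being a prefix of a word} and \quo{being a prefix of a pomset}.

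For the $(\Rightarrow)$ direction, suppose $\aW \in \pref[\langof{\apom}]$. By definition of prefix closure, there exists a linearization $\aW'' = \alfof{\apom}(\ae_1) \cdots \alfof{\apom}(\ae_n)$ of $\apom$, built from a total order $\ae_1, \ldots, \ae_n$ of $\esetof{\apom}$ consistent with $\leqof{\apom}$, such that $\aW = \alfof{\apom}(\ae_1) \cdots \alfof{\apom}(\ae_k)$ for some $k \leq n$. I would then take $\apom'$ to be the pomset whose underlying lposet is the restriction of $\apom$ to $\eset' = \{\ae_1,\ldots,\ae_k\}$ with $\phi : \eset' \hookrightarrow \esetof{\apom}$ the inclusion. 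The required identity $\phi(\leqof{\apom'}) = \leqof{\apom} \cap (\esetof{\apom} \times \phi(\eset'))$ follows from the fact that $\ae_1,\ldots,\ae_k$ is an initial segment of the linearization: whenever $\ae \leqof{\apom} \ae_i$ with $i \leq k$, the linearization property forces $\ae = \ae_j$ for some $j \leq i \leq k$, so $\ae \in \phi(\eset')$. Thus $\apom'$ is a prefix of $\apom$ and $\aW \in \langof{\apom'}$ by construction.

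For the $(\Leftarrow)$ direction, suppose $\apom'$ is a prefix of $\apom$ via an injection $\phi$, and $\aW \in \langof{\apom'}$. Let $\aW = \alfof{\apom'}(\ae'_1)\cdots\alfof{\apom'}(\ae'_k)$ be the corresponding linearization. I would extend this to a linearization of $\apom$ by picking any topological order $\ae_{k+1},\ldots,\ae_n$ of $\esetof{\apom}\setminus\phi(\esetof{\apom'})$ consistent with $\leqof{\apom}$ restricted to that set, and concatenating the two. To check that the concatenated sequence $\phi(\ae'_1),\ldots,\phi(\ae'_k),\ae_{k+1},\ldots,\ae_n$ is itself a valid linearization of $\apom$, I need two facts: (i) no event outside $\phi(\esetof{\apom'})$ can precede (in $\leqof{\apom}$) any event inside $\phi(\esetof{\apom'})$, which is exactly the content of the prefix condition $\phi(\leqof{\apom'}) = \leqof{\apom} \cap (\esetof{\apom} \times \phi(\eset'))$; and (ii) the internal order of $\phi(\ae'_1),\ldots,\phi(\ae'_k)$ is consistent with $\leqof{\apom}$, which again follows from the same identity (restricted to the product with $\phi(\eset')$). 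The resulting word is then a linearization of $\apom$ having $\aW$ as a prefix, so $\aW \in \pref[\langof{\apom}]$.

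The main technical point, and the only place where the proof could go astray, is the use of the strong form of the prefix condition in the backward direction: the weaker sub-pomset condition $\phi(\leqof{\apom'}) = \leqof{\apom} \cap (\phi(\eset') \times \phi(\eset'))$ would not forbid an event outside $\phi(\eset')$ from being a $\leqof{\apom}$-predecessor of one inside, and hence would not permit the extension of a linearization in the manner above. This is exactly the asymmetry the authors emphasise in the remark just after \cref{def:prefixpomset}, and it is what makes the equivalence go through on the nose.
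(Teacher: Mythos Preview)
The paper states this lemma without proof, so there is no argument to compare against. Your proof sketch is correct and is the natural one: the forward direction builds the prefix pomset as the downward-closed initial segment determined by the chosen linearization, and the backward direction extends a linearization of the prefix to one of $\apom$ using precisely the asymmetric condition $\phi(\leq') = {\leq} \cap (\eset \times \phi(\eset'))$ to rule out predecessors outside $\phi(\eset')$. Your closing remark that the weaker sub-pomset condition would not suffice is exactly the point the authors make in the paragraph following \cref{def:prefixpomset}, and is presumably why they leave the lemma unproved.
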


\begin{definition}[{\CCP[3]}]
  A set of pomsets $\aR$ over $\lset$ satisfies closure condition
  \CCP[3] if for all tuples of pomsets
  $(\bar \apom^{\p})_{\p \in \ptpset}$ such that for every $\p$
  $\bar \apom^{\p}$ is a prefix of a pomset $\apom^{\p} \in \aR$, and
  for every pomset
  $\bar \apom \in \normalform((\projpom{\bar \apom^{\p}}{\p})_{\p \in
    \ptpset})$ there is a pomset $\apom' \in \aR$ and a prefix
  $\bar \apom'$ of $\apom'$ such that
  $\bar \apom \rsubtype \bar \apom'$.
\end{definition}
\begin{theorem}\label{thm:cc3pom}
  If $\aR$ satisfies \CCP[3] then $\langof \aR$ satisfies \CC[3].
\end{theorem}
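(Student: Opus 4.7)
The proof will mirror the structure of \cref{thm:ccp2vscc2}, lifting it to the setting of prefixes. The plan is to start from a well-formed word $\aW$ that is $\pset$-feasible for $\pref[\langof \aR]$, build a prefix pomset for each participant whose language contains the required witness, glue these projections together into an inter-participant closure, and then invoke \CCP[3] to conclude $\aW \in \pref[\langof \aR]$.

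First, I would unfold the hypothesis: for every $\p \in \ptpset$ there exists $\aW^{\p} \in \pref[\langof \aR]$ with $\projword{\aW}{\p} = \projword{\aW^{\p}}{\p}$. By definition of prefix closure, $\aW^{\p}$ is a prefix of some linearization of some pomset $\apom^{\p} \in \aR$, so by the lemma preceding \CCP[3] there exists a prefix pomset $\bar \apom^{\p}$ of $\apom^{\p}$ such that $\aW^{\p} \in \langof{\bar \apom^{\p}}$. Fix a linearization $\lineariz$ of $\bar \apom^{\p}$ witnessing $\aW^{\p}$.

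Next, I would construct the candidate closure pomset. Following the pattern of \cref{thm:ccp2vscc2}, take
\[
\bar \apom = \left[ \bigcup_{\p \in \ptpset} \esetof{\projpom{\bar \apom^{\p}}{\p}},\ \ \leqof{I} \cup \bigcup_{\p \in \ptpset} \leqof{\projpom{\bar \apom^{\p}}{\p}},\ \ \bigcup_{\p \in \ptpset} \alfof{\projpom{\bar \apom^{\p}}{\p}} \right]
\]
where $\leqof{I}$ matches each receive event $\ae^{\q}$ of subject $\q$ with the output event $\ae^{\p}$ of subject $\p$ that fires in $\aW$ producing the message consumed at $\ae^{\q}$; concretely, using the counting device $\countEvents$ of \cref{thm:ccp2vscc2}, impose $\ae^{\p} \leqof{I} \ae^{\q}$ exactly when the two events carry matching labels and $\countEvents[\lineariz][\ae^{\p}][\aout] = \countEvents[{\lineariz[\q]}][\ae^{\q}][\ain]$. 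Well-formedness of $\aW$ guarantees that each receive event has a unique matching send and that the orientation of matches respects \cref{def:pomsets}; crucially, since $\aW$ may be incomplete, some output events in $\bar \apom$ may remain unmatched, but this is allowed because \cref{def:inter-closure} does not require every output to be matched, only that matches go from outputs to inputs. By this construction, $\bar \apom \in \normalform((\projpom{\bar \apom^{\p}}{\p})_{\p \in \ptpset})$ and $\aW \in \langof{\bar \apom}$.

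Finally, I would apply \CCP[3] to $(\bar \apom^{\p})_{\p \in \ptpset}$ and $\bar \apom$ to obtain a pomset $\apom' \in \aR$ with a prefix $\bar \apom'$ such that $\bar \apom \rsubtype \bar \apom'$. By the monotonicity lemma for $\rsubtype$, $\aW \in \langof{\bar \apom} \subseteq \langof{\bar \apom'}$; since $\bar \apom'$ is a prefix of $\apom'$, every word in $\langof{\bar \apom'}$ is a prefix of a word in $\langof{\apom'}$, so $\aW \in \pref[{\langof{\apom'}}] \subseteq \pref[\langof \aR]$, establishing \CC[3].

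The main obstacle is the careful definition of $\leqof{I}$ in the prefix setting. Because $\aW$ is only required to be well-formed (not complete), the number of outputs of a given label in $\aW$ may exceed the number of matching inputs, so not every output event in $\bigcup_{\p} \esetof{\projpom{\bar \apom^{\p}}{\p}}$ admits a matching input. I have to argue that the counting scheme still produces a well-defined, well-formed pomset in which $\aW$ is a linearization, and that this pomset belongs to the inter-participant closure as specified in \cref{def:inter-closure}. The verification that no two matched send--receive pairs get \emph{crossed} (clause \ref{it:deg} of \cref{def:msc}) follows from the total order in which $\aW$ fires communications, matched by the counting discipline.
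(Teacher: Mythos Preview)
Your proposal is correct and follows essentially the same approach as the paper's own proof: both obtain prefix pomsets $\bar\apom^{\p}$ via the prefix-linearization lemma, build $\bar\apom$ by gluing the projections with the same counting-based $\leqof{I}$, and then invoke \CCP[3] and the $\rsubtype$-monotonicity lemma to place $\aW$ in $\pref[\langof{\aR}]$. Your write-up is in fact more explicit than the paper's about why incompleteness of $\aW$ is harmless for membership in the inter-participant closure, but the argument is the same.
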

\begin{proof}
  Let $\aW$ be a word that satisfies hypothesis of \CC[3]: for every
  participant $\p \in \pset$, there exists a word
  $\aW^{\p} \in \pref[\langof{\aR}]$ such that
  $\projword{\aW}{\p} = \projword{\aW^{\p}}{\p}$.  Therefore,
  there is a pomset $\bar \apom^{\p}$ 
  prefix of a pomset $\apom^{\p} \in \aR$ such that
  $\aW^{\p} \in \langof{\bar \apom^{\p}}$  and let $\lineariz$ be one of the
  linearizations of $\bar \apom^{\p}$ that corresponds to $\aW^{\p}$.
  Define
  \begin{align*}
    \bar \apom = \left [
    \bigcup_{\p \in \ptpset} \esetof{\projpom{\bar \apom^{\p}}{\p}},
    \qquad
    \leqof{I} \cup
    \bigcup_{\p \in \ptpset}
    \big(
    \leqof{\projpom{\bar  \apom^{\p}}{\p}}
    \big),
    \quad
    \bigcup_{\p \in \ptpset} \alfof{\projpom{\bar \apom^{\p}}{\p}},
    \right ]
  \end{align*}
where
  \begin{align*}
\leqof{I}  = 
    \bigcup_{\q \neq \p \in \ptpset}
    \left \{
    (\ae^{\p}, \ae^{\q})
    \in \esetof{
    \projpom{\bar \apom^{\p}}{\p}
    } \times 
    \esetof{
    \projpom{\bar \apom^{\q}}{\q}
    } \sst
    \begin{array}{l}
      \alfof{\bar \apom^{\p}}(\ae^{\p}) = \aout 
      \textit{ and }\alfof{\bar \apom^{\q}}(\ae^{\q}) = \ain \\
      \textit{and }\countEvents[\lineariz][\ae^{\p}][\aout] =
      \countEvents[{\lineariz[\q]}][\ae^{\q}][\ain]
    \end{array}
    \right \}
  \end{align*}
The pomset $\bar \apom$ is in
$\normalform((\projpom{\bar \apom^{\p}}{\p})_{\p \in \ptpset})$,
since it is well-formed and
$\leqof{I}$ satisfies conditions of Definition~\ref{def:inter-closure}.
In fact, since $\aW$ is
well-formed, all receives have matching sends.
Also by construction,  $\aW \in \langof{\bar \apom}$
and, for every $\p$,
$\projpom{\bar \apom}{\p} \rsubtype \projpom{\bar \apom^{\p}}{\p}$.
Hence, by \CCP[3] there exists $\apom' \in \aR$ and a prefix
$\bar \apom'$ of $\apom $ such that $\bar \apom \sqsubseteq \bar \apom'$,
therefore $\aW \in \langof{\bar \apom'}$ and therefore
$\aW \in \pref[\langof{\aR}]$.
\end{proof}
From Theorems~\ref{thm:cc3},\ref{thm:ccp2vscc2}, and \ref{thm:cc3pom},
it follows that if a set of pomsets $\aR$ satisfies \CCP[2] and
\CCP[3] then $\langof \aR$ is safe realisable.

\begin{figure*}[t!]
    \centering

%  LocalWords:  realisable
    \begin{subfigure}[b]{0.35\textwidth}
      \centering
      \begin{tikzpicture}[scale=1, every node/.style = {rectangle,draw}, transform shape]\tiny
        \node (a1) at (0,0) {$\aout[@][@][][{\msg[x]}]$};
        \node (a2) at (0,-2) {$\aout[@][@][][{\msg[z]}]$};
        \node (b1) at (1.25,0) {$\ain[@][@][][{\msg[x]}]$};
        \node (b2) at (2.5,0) {$\ain[c][@][][{\msg[x]}]$};
        \node (b3) at (1.75,-2) {$\ain[@][@][][{\msg[z]}]$};
        \node (c1) at (3.75,0) {$\aout[c][b][][{\msg[x]}]$};
        %\path[->,draw] (a1) -- (a2);
        \path[->,draw] (b1) -- (b3);
        \path[->,draw] (b2) -- (b3);
        \path[->,draw] (a1) -- (b1);
        \path[->,draw] (c1) -- (b2);
        \path[->,draw] (a2) -- (b3);
        \end{tikzpicture}      
        \caption{$\apom_{\ref{fig:example:cc3pom}_a}$}
    \end{subfigure}%
    ~ 
    \begin{subfigure}[b]{0.3\textwidth}
        \centering
        \begin{tikzpicture}[scale=1, every node/.style = {rectangle,draw}, transform shape]\tiny
          \node (a1) at (0,0) {$\aout[@][@][][{\msg[y]}]$};
          \node (a2) at (0,-2) {$\aout[@][@][][{\msg[z]}]$};
          \node (b1) at (1.25,0) {$\ain[@][@][][{\msg[y]}]$};
          \node (b2) at (1.25,-1) {$\ain[c][@][][{\msg[y]}]$};
          \node (b3) at (1.25,-2) {$\aout[@][@][][{\msg[z]}]$};
          \node (c1) at (2.5,-1) {$\aout[c][b][][{\msg[y]}]$};
          \path[->,draw] (a1) -- (a2);
          \path[->,draw] (b1) -- (b2);
          \path[->,draw] (b2) -- (b3);
          \path[->,draw] (c1) -- (b2);
          \path[->,draw] (a1) -- (b1);
          \path[->,draw] (a2) -- (b3);
        \end{tikzpicture}
        \caption{$\apom_{\ref{fig:example:cc3pom}_b}$}
    \end{subfigure}
    ~ 
    \begin{subfigure}[b]{0.3\textwidth}
      \centering
      \begin{tikzpicture}[scale=1, every node/.style = {rectangle,draw}, transform shape]\tiny
        \node (a1) at (0,0) {$\aout[@][@][][{\msg[y]}]$};
        \node (a2) at (0,-2) {$\aout[@][@][][{\msg[z]}]$};
        \node (b1) at (1.25,0) {$\ain[@][@][][{\msg[y]}]$};
        \node (c1) at (2.5,0) {$\aout[c][b][][{\msg[x]}]$};
        \path[->,draw] (a1) -- (a2);
        \path[->,draw] (a1) -- (b1);
      \end{tikzpicture}
      \caption{$\apom_{\ref{fig:example:cc3pom}_c}$}
    \end{subfigure}
    \caption{\label{fig:example:cc3pom}The language of
      $\{\apom_{\ref{fig:example:cc3pom}_a},
      \apom_{\ref{fig:example:cc3pom}_b}\}$ is not realisable}
\end{figure*}

The family of pomsets
$\aR = \{\apom_{\ref{fig:example:cc3pom}_a},
\apom_{\ref{fig:example:cc3pom}_b}\}$
of \cref{fig:example:cc3pom} exemplifies a common obstacle for safe
realisability.  Here, participants $\p$ and
$\ptp[C]$ should both send the message $\msg[x]$ or both send the
message $\msg[y]$.
However, $\p$ and $\ptp[C]$ do not coordinate to achieve this
behaviour; this makes it impossible for them to distributively commit
to a common choice.
The family of pomsets $\aR$ does not satisfy 
\CCP[3].
In fact, pomset $\apom_{\ref{fig:example:cc3pom}_c}$ satisfies hypothesis of \CCP[3]
(using $\apom_{\ref{fig:example:cc3pom}_a}$ for $\ptp[C]$ and $\apom_{\ref{fig:example:cc3pom}_b}$ for both $\p$
and $\q$), however there is no pomset in $\aR$ whose prefix
is more permissive that $\apom_{\ref{fig:example:cc3pom}_c}$.

\newcommand{\pomsetTerm}[1][\ptpset']{\mathcal{R}_{#1}}
Like for the closure conditions, we lift the sufficient condition
for termination soundness to pomsets.
\begin{definition}[Terminating pomsets]
  A participant $\p \in \pset$ is \emph{termination-unaware} for a set
  of pomsets $\aR$ if there are $\apom, \apom' \in \aR$, and a
  label-preserving injection
  $\phi:\esetof{\projpom{\apom}{\p}} \to
  \esetof{\projpom{\apom'}{\p}}$
  such that
  $\leq = \phi(\leqof{\projpom{\apom}{\p}}) \cup
  \leqof{\projpom{\apom'}{\p}}$ is a partial order and
  \[
  \mathit{min}_{\leq}(\esetof{\projpom{\apom'}{\p}}) \subseteq
  \phi(\mathit{min}_{\leqof{\projpom{\apom}{\p}}}(\esetof{\projpom{\apom}{\p}}))
  \qqand
  \mathit{min}_\leq( \esetof{\projpom{\apom'}{\p}} \setminus
  \phi(\esetof{\projpom{\apom}{\p}}) ) \cap \lset^{?} \neq \emptyset
  \]
  Given a set of participants $\ptpset' \subseteq \ptpset$, we say
  that $\aR$ is \emph{$\ptpset'$-terminating} when there is no $\p \in \ptpset'$
  termination-unaware for $\aR$.
\end{definition}

\begin{figure*}[t!]
    \centering
    \begin{subfigure}[b]{0.35\textwidth}
      \centering
      \begin{tikzpicture}[scale=1, every node/.style = {rectangle,draw}, transform shape]\tiny
        \node (a1) at (0,0) {$\aout[@][@][][{\msg[x]}]$};
        \node (a2) at (0,-1) {$\aout[@][@][][{\msg[y]}]$};
        \node (a3) at (0,-2) {$\aout[@][@][][{\msg[z]}]$};
        \node (b1) at (2,0) {$\ain[@][@][][{\msg[x]}]$};
        \node (b2) at (2,-1) {$\ain[@][@][][{\msg[y]}]$};
        \node (b3) at (2,-2) {$\ain[@][@][][{\msg[z]}]$};
        \path[->,draw] (a1) -- (a2);
        \path[->,draw] (a1) -- (b1);
        \path[->,draw] (a2) -- (a3);
        \path[->,draw] (a2) -- (b2);
        \path[->,draw] (b1) -- (b2);
        \path[->,draw] (b2) -- (b3);
        \path[->,draw] (a3) -- (b3);
      \end{tikzpicture}
      \caption{$\apom_{\ref{fig:example:term2}_a}$}
    \end{subfigure}%
    ~ 
    \begin{subfigure}[b]{0.35\textwidth}
      \centering
      \begin{tikzpicture}[scale=1, every node/.style = {rectangle,draw}, transform shape]\tiny
        \node (a1) at (0,0) {$\aout[@][@][][{\msg[x]}]$};
        \node (a2) at (0,-1) {$\aout[@][@][][{\msg[y]}]$};
        \node (a3) at (0,-2) {$\aout[@][@][][{\msg[z]}]$};
        \node (a4) at (0,-3) {$\aout[@][@][][{\msg[w]}]$};
        \node (b1) at (3,0) {$\ain[@][@][][{\msg[x]}]$};
        \node (b2) at (2,-1) {$\ain[@][@][][{\msg[y]}]$};
        \node (b3) at (3,-2) {$\ain[@][@][][{\msg[z]}]$};
        \node (b4) at (2,-3) {$\ain[@][@][][{\msg[w]}]$};
        \path[->,draw] (a1) -- (a2);
        \path[->,draw] (a1) -- (b1);
        \path[->,draw] (a2) -- (a3);
        \path[->,draw] (a2) -- (b2);
        \path[->,draw] (a3) -- (a4);
        \path[->,draw] (b1) -- (b2);
        \path[->,draw] (b1) -- (b3);
        \path[->,draw] (b2) -- (b4);
        \path[->,draw] (a3) -- (b3);
        \path[->,draw] (a4) -- (b4);
        \path[->,draw] (b3) -- (b4);
      \end{tikzpicture}
        \caption{$\apom_{\ref{fig:example:term2}_b}$}
    \end{subfigure}
    ~ 
    \begin{subfigure}[b]{0.25\textwidth}
      \centering
      \begin{tikzpicture}[scale=1, every node/.style = {rectangle,draw}, transform shape]\tiny
        \newcounter{row}{0}
        \foreach \m in {x,y,z,w}{
          \refstepcounter{row}
          \node (out\m) at (0,-\therow) {$\aout[@][@][][{\msg[\m]}]$};
          \node (in\m) at (2,-\therow) {$\ain[@][@][][{\msg[\m]}]$};
          \path[->,draw] (out\m) -- (in\m);
        }
        \foreach \s/\t in {x/y,y/z,z/w}{
          \path[->,draw] (out\s) -- (out\t);
          \path[->,draw] (in\s) -- (in\t);
        }
        \node[fill=orange!50,opacity=.3,] (draw) at (2,-4) {\phantom{$\ain[@][@][][{\msg[w]}]$}};
      \end{tikzpicture}
        \caption{$\leqof{\apom_{\ref{fig:example:term2}_a}} \cup \leqof{\apom_{\ref{fig:example:term2}_a}}$}
    \end{subfigure}
    \caption{\label{fig:example:term2}The set $\aR_{\eqref{fig:example:term2}}=\{\apom_{\ref{fig:example:term2}_a}, \apom_{\ref{fig:example:term2}_a}\}$ is not termination
      sound for $\q$}
\end{figure*}
We use \cref{fig:example:term2} to describe termination awareness.
$\q$ is \emph{termination-unaware} for
 the set of pomsets
$\aR_{\eqref{fig:example:term2}}$.
In fact,
let
$\phi:\esetof{\projpom{\apom_{\ref{fig:example:term2}_a}}{\q}} \to
\esetof{\projpom{\apom_{\ref{fig:example:term2}_b}}{\q}}$ be the only
possible label-preserving injection, then
$\leq= \phi(\leqof{\projpom{\apom_{\ref{fig:example:term2}_a}}{\q}}) \cup
\leqof{\projpom{\apom_{\ref{fig:example:term2}_b}}{\q}}$ is the
partial order in \cref{fig:example:term2}.c, 
and
$\mathit{min}_\leq( \esetof{\projpom{\apom_{\ref{fig:example:term2}_b}}{\q}} \setminus
  \phi(\esetof{\projpom{\apom_{\ref{fig:example:term2}_a}}{\q}}) )  =
\{\ain[\p][\q][][{\msg[w]}]\}$  is not disjoint from $\lset^{?}$.
Intuitively, 
$\leq$
represents the intersection of the languages of the two pomsets
$\projpom{\apom_{\ref{fig:example:term2}_b}}{\q}$ and 
$\projpom{\apom_{\ref{fig:example:term2}_a}}{\q}$.
\begin{theorem}
  Given $\ptpset' \subseteq \ptpset$, if $\aR$ is
  $\ptpset'$\emph{-terminating} then $\langof \aR$ is $\ptpset'$\emph{-terminating}.
\end{theorem}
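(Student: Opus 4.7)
The plan is to prove the contrapositive: assume some $\p \in \ptpset'$ is termination-unaware for $\langof{\aR}$, and exhibit $\apom, \apom' \in \aR$ together with a label-preserving injection $\phi$ witnessing that $\p$ is termination-unaware for $\aR$. First I would fix $\aW, \aW' \in \langof{\aR}$ with $\projword{\aW}{\p}$ a prefix of $\projword{\aW'}{\p}$ and the subsequent letter in $\lset^{?}$, and choose pomsets $\apom, \apom' \in \aR$ having $\aW$ and $\aW'$ among their linearizations. The two linearizations project to enumerations $e_1, \dots, e_n$ of $\esetof{\projpom{\apom}{\p}}$ and $e'_1, \dots, e'_m$ of $\esetof{\projpom{\apom'}{\p}}$ with matching labels in the first $n$ positions (by the prefix hypothesis); I define $\phi(e_i) = e'_i$ for $i \le n$, which is label-preserving and injective by construction.

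Next I would verify the three requirements imposed on the pomset notion of termination-unawareness. For the partial-order requirement on $\leq = \phi(\leqof{\projpom{\apom}{\p}}) \cup \leqof{\projpom{\apom'}{\p}}$, I observe that each contributing relation respects the positional ranking of $\esetof{\projpom{\apom'}{\p}}$ induced by the linearization of $\aW'$ (the first $n$ positions also agreeing with the $\aW$-linearization, by the prefix hypothesis), so the transitive closure of the union is contained in a strict total order, hence antisymmetric and therefore a partial order. The second minimum condition then follows readily: the event $e'_{n+1}$ lies outside $\phi(\esetof{\projpom{\apom}{\p}})$, carries a label in $\lset^{?}$ by hypothesis, and is minimal in the complement since all its $\leqof{\projpom{\apom'}{\p}}$-predecessors must sit at positions $\le n$ and thus inside $\phi(\esetof{\projpom{\apom}{\p}})$.

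The first minimum condition is the delicate step and I expect it to be the main obstacle. I need to show that no $\leq$-minimum on $\esetof{\projpom{\apom'}{\p}}$ can lie outside the $\phi$-image; equivalently, every $e'_k$ with $k > n$ must have a $\leq$-predecessor. In the totally ordered (MSC) case this is automatic because $\leqof{\projpom{\apom'}{\p}}$ is total, forcing a predecessor at position $k-1$ that is in the $\phi$-image. For multi-threaded participants, however, an event $e'_k$ could a priori be incomparable to everything in $\phi(\esetof{\projpom{\apom}{\p}})$ and minimal in $\leqof{\projpom{\apom'}{\p}}$, which would spoil the required inclusion. To handle this I would exploit the flexibility in the hypothesis to choose $\aW'$ (and, if necessary, $\apom'$) so that every minimum of $\leqof{\projpom{\apom'}{\p}}$ falls within the first $n$ positions of the linearization, or equivalently so that $\phi(\esetof{\projpom{\apom}{\p}})$ is downward-closed in $\projpom{\apom'}{\p}$; I would lean on the well-formedness of $\apom'$ (cross-participant matching-edge constraints on input events) and on the freedom to reshuffle the chosen linearization of $\apom'$ without disturbing the prefix $\projword{\aW}{\p}$.
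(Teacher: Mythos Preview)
Your approach mirrors the paper's: both define $\phi$ by aligning positions in the two linearizations and then try to verify the three clauses of the pomset definition of termination-unawareness. You correctly isolate the first minimum inclusion $\mathit{min}_{\leq}(\esetof{\projpom{\apom'}{\p}}) \subseteq \phi(\mathit{min}_{\leqof{\projpom{\apom}{\p}}}(\esetof{\projpom{\apom}{\p}}))$ as the obstacle; the paper's proof does not address this clause at all (it only asserts that $\leq$ is a partial order and then invokes the hypothesis). Your proposed remedy, however, cannot work in general. The freedom to reshuffle the linearization of $\apom'$ does not help when $\projpom{\apom'}{\p}$ has more $\leqof{\projpom{\apom'}{\p}}$-minimal events than $n = |\esetof{\projpom{\apom}{\p}}|$: some minimum of $\projpom{\apom'}{\p}$ must then land at a position $> n$ in \emph{every} linearization whose first $n$ labels are $\projword{\aW}{\p}$, so no admissible choice of $\aW'$ (or of $\apom'$ within the witnessed language behaviour) rescues the inclusion.

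Concretely, take $\aR=\{\apom,\apom'\}$ where $\projpom{\apom}{\p}$ is the single event $\aout[\p][\q][][{\msg[a]}]$ and $\projpom{\apom'}{\p}$ consists of two incomparable events labelled $\aout[\p][\q][][{\msg[a]}]$ and $\ain[\ptp[c]][\p][][{\msg[b]}]$ (the full pomsets add the obvious matching events for $\q$ and $\ptp[c]$). On the language side, $\projword{\aW}{\p}=\aout[\p][\q][][{\msg[a]}]$ is a proper prefix of $\projword{\aW'}{\p}=\aout[\p][\q][][{\msg[a]}]\,\ain[\ptp[c]][\p][][{\msg[b]}]$ with an input next, so $\p$ is termination-unaware for $\langof{\aR}$. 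On the pomset side, the unique label-preserving injection sends the output of $\apom$ to the output of $\apom'$; the resulting $\leq$ is discrete, both events of $\projpom{\apom'}{\p}$ are $\leq$-minimal, but only one lies in the image of $\phi$, so the first minimum condition fails; the remaining pairs $(\apom,\apom)$, $(\apom',\apom')$, $(\apom',\apom)$ yield no witness either. Hence $\aR$ is $\{\p\}$-terminating in the pomset sense while $\langof{\aR}$ is not. This is a genuine gap shared by your proposal and the paper's proof; the first-minimum clause in the definition appears too strong for the theorem to hold in the multi-threaded setting.
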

\begin{proof}
Given a word $\aW \in \langof{\aR}$, there is a pomset
$\apom \in \aR$ such that $\aW \in \langof{\aR}$.
Let $\p \in \ptpset'$ 
and assume that there is 
$\aW' \in \langof{\aR}$ such that 
$\projword{\aW}{\p}$ is a prefix of $\projword{\aW'}{\p}$.
Therefore, there is a pomset $\apom' \in \aR$
such that $\projword{\aW'}{\p} \in \langof{\projpom{\apom'}{\p}}$.
Let $\ae_1, \dots, \ae_n$ and 
$\ae'_1, \dots, \ae'_{n'}$, with $n < n'$, be the linearizations
of $\leq_{\apom}$ and $\leq_{\apom'}$ respectively for the world $\aW$ and $\aW'$
respectively.
Let $\phi$ be the injection that maps $\ae_i$ to $\ae'_i$ for $1 \leq i \leq n$,
then $\leq = 
\phi(\leqof{\projpom{\apom}{\p}}) \cup
\leqof{\projpom{\apom'}{\p}}$ is a partial order.
Therefore
$\mathit{min}_\leq( \esetof{\projpom{\apom'}{\p}} \setminus
  \phi(\esetof{\projpom{\apom}{\p}}) ) \cap \lset^{?} \neq \emptyset$ 
 since $\aR$ is $\ptpset'$\emph{-terminating},
thus the first symbol of $\aW'$ after $\aW$ cannot be
an input.
\end{proof}

%%% Local Variables:
%%% mode: latex
%%% TeX-master: "main"
%%% End:

\section{Discussion on the pomset based conditions}
\label{sec:implementation}
If a pomset is thought of as the specification of a possible scenario
of a system, a practical advantage of using the conditions of
\cref{sec:pomsets} is that problems can be discovered at design-time.
This permits to easily isolate the problematic scenarios of a
specification even if they share multiple traces with non-problematic
scenarios.

Checking \CCP[2] and \CCP[3] is decidable since we assume $\aR$ to be a finite
set of finite pomsets and $\ptpset$ to be finite.
For \CCP[2], there are finite tuples $(\apom^{\p})_{\p \in \ptpset}$ of pomsets of
$\aR$ and for each tuple the inter-participant closure is a finite set of
finite pomsets. For \CCP[3], the number of prefixes of pomsets in $\aR$ is also 
finite.
However, verifying these conditions is in general 
expensive due to two reasons: the combinatorial explosion of the
inter-participant closure and the need of finding a graph isomorphism
to check relation $\rsubtype$ between pomsets and to prove the
existence of the label preserving injection $\phi$.
In both cases,
this complexity depends on the presence of multiple and independent
instances of the same action.
\begin{definition}
  Let $\apom$ be a pomset over $\lset$. An action $\al \in \lset$ 
  \emph{concurrently repeats in $\apom$} if there exist $\ae,\ae' \in \esetof{\apom}$
  such that $\ae \neq \ae'$,
  $\alfof{\apom}(\ae)=\alfof{\apom}(\ae')=\al$,
  and neither $\ae \leqof{\apom} \ae'$ nor $\ae' \leqof{\apom} \ae$.
\end{definition}
In practice, the presence of actions that concurrently repeat is
limited. In fact, specification formalisms usually impose conditions
that syntactically avoid this issue (e.g. see well-forkedness of
~\cite{gt18} or the even more restrictive conditions of
e.g.,~\cite{honda16jacm}) because sending the same message in two
independent threads may \quo{confuse} receivers making it hard (or
impossible) to decide which receiving thread should consume the
message, leading to coordination problems.

We sketch the complexity analysis for \CCP[2].
For a set of pomsets $\aR$, there are $|\aR|^{|\ptpset|}$ possible
tuples $(\apom^{\p})_{\p \in \ptpset}$.  For each tuple
$(\apom^{\p})_{\p \in \ptpset}$, the number of pomsets in the
inter-participant closure is proportional to $\prod_e 2^{\#(e)}$,
where $\#(e)$ is the number of concurrent repetitions of the action of
an event $e$ in $(\projpom{\apom^{\p}}{\p})_{\p \in \ptpset})$.
Therefore, if there are no concurrently repeated actions then the
inter-participant closure contains at most one pomset.
Checking $\apom \sqsubseteq \apom'$ requires to find a label
preserving injection $\phi$ from events of $\apom$ to events of
$\apom'$ that does not violate event orders.
This problem can be reduced to graph isomorphism and its complexity is
exponential in $\prod_e {\#(e)}$.
In fact, for every pomset $\apom$ in the inter-participant closure,
the restriction to the events of $\apom$ having a same
non-concurrently repeated action is totally ordered by the order of
$\apom$, thus the identification of the injection is trivial.
Therefore, if there are no concurrently repeated actions in $\aR$ then
checking \CCP[2] can be done in polynomial time with respect to the
number of events.  Condition \CCP[2] avoids the explicit computation
of the language of the family of pomsets, which can lead to
combinatorial explosion due to interleavings.

For example, $\aR_{\eqref{fig:ex:2-1}}$ contains one pomset and has two actions that occur
concurrently: $\aout[@][@][][{\msg[x]}]$ and $\aout[\q][\p][][{\msg[x]}]$.
Therefore there is only one tuple $(\apom^{\p})_{\p \in \ptpset}$ and
its inter-participant closure has two pomsets (see Fig.\ref{fig:ex:2-2}).
Checking $\sqsubseteq$ between these pomsets and the pomset in
$\aR_{\eqref{fig:ex:2-1}}$, requires to iterate over all possible label preserving isomorphisms.
However, since all actions except  $\aout[@][@][][{\msg[x]}]$ and
$\aout[\q][\p][][{\msg[x]}]$ do not occur concurrently, there are only two
of such isomorphisms.
Checking \CC[2] can be more expensive. Pomsets $\apom_{\eqref{fig:ex:2-1}_a}$ and
$\apom_{\eqref{fig:ex:2-1}_b}$ of \cref{fig:ex:2-1} have $32$
different linearizations, each one consisting of $8$ events.
Therefore the language of $\aR_{\eqref{fig:ex:2-1}}$ consists of
$32 * 32 * 2^{8} = 2^{18}$ words.  Therefore, directly analyzing the
inter-participant closure in \cref{fig:ex:2-2} is more efficient.

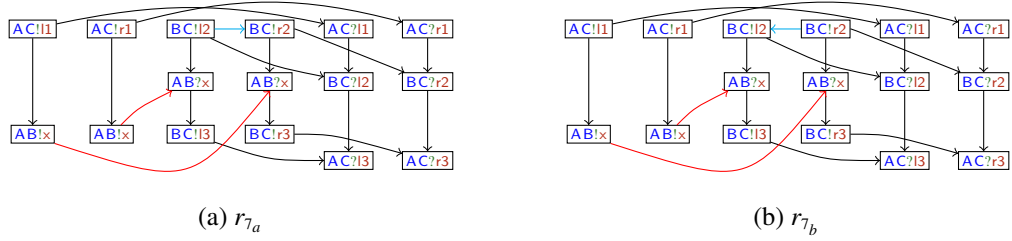
\begin{figure*}[t!]
    \centering
    \begin{subfigure}[b]{0.45\textwidth}
      \centering
      \begin{tikzpicture}[scale=.7, every node/.style = {rectangle,draw}, transform shape]\tiny
        \node (a1) at (0,0) {$\aout[@][C][][{\msg[l1]}]$};
        \node (a2) at (0,-2) {$\aout[@][@][][{\msg[x]}]$};
        \node (a1') at (1.5,0) {$\aout[@][C][][{\msg[r1]}]$};
        \node (a2') at (1.5,-2) {$\aout[@][@][][{\msg[x]}]$};
        \node (b1) at (3,0) {$\aout[b][c][][{\msg[l2]}]$};
        \node (b2) at (3,-1) {$\ain[@][@][][{\msg[x]}]$};
        \node (b3) at (3,-2) {$\aout[b][c][][{\msg[l3]}]$};
        \node (b1') at (4.5,0) {$\aout[b][c][][{\msg[r2]}]$};
        \node (b2') at (4.5,-1) {$\ain[@][@][][{\msg[x]}]$};
        \node (b3') at (4.5,-2) {$\aout[b][c][][{\msg[r3]}]$};
        \node (c2) at (6,-1) {$\ain[b][c][][{\msg[l2]}]$};
        \node (c1) at (6,0) {$\ain[a][c][][{\msg[l1]}]$};
        \node (c3) at (6,-2.5) {$\ain[@][C][][{\msg[l3]}]$};
        \node (c2') at (7.5,-1) {$\ain[b][c][][{\msg[r2]}]$};
        \node (c1') at (7.5,0) {$\ain[a][c][][{\msg[r1]}]$};
        \node (c3') at (7.5,-2.5) {$\ain[@][C][][{\msg[r3]}]$};
        \path[->,draw] (a1) -- (a2);
        \path[->,draw] (a1') -- (a2');
        \path[->,draw] (b1) -- (b2);
        \path[->,draw,cyan] (b1) -- (b1');
        \path[->,draw] (b2) -- (b3);
        \path[->,draw] (b1') -- (b2');
        \path[->,draw] (b2') -- (b3');
        \path[->,draw] (c1) -- (c2);
        \path[->,draw] (c2) -- (c3);
        \path[->,draw] (c1') -- (c2');
        \path[->,draw] (c2') -- (c3');
        \path[->,draw] (a1) edge [bend left=12] (c1);
        \path[->,draw] (a1') edge [bend left=15] (c1');
        \path[->,draw] (b1) .. controls (4.5,-.7) .. (c2);
        \path[->,draw] (b1') .. controls (5,0) .. (c2');
        \path[->,draw] (b3) .. controls (4.5,-2.5) .. (c3);
        \path[->,draw] (b3') .. controls (5.7,-2) .. (c3');
        \path[->,draw,red] (a2.south east) .. controls (3,-3) .. (b2'.south);
        \path[->,draw,red] (a2') .. controls (2,-1.5) .. (b2);
      \end{tikzpicture}
      \caption{$\apom_{\ref{fig:ex:counter}_a}$}
    \end{subfigure}%
    ~
    \begin{subfigure}[b]{0.45\textwidth}
        \centering
      \begin{tikzpicture}[scale=.7, every node/.style = {rectangle,draw}, transform shape]\tiny
        \node (a1) at (0,0) {$\aout[@][C][][{\msg[l1]}]$};
        \node (a2) at (0,-2) {$\aout[@][@][][{\msg[x]}]$};
        \node (a1') at (1.5,0) {$\aout[@][C][][{\msg[r1]}]$};
        \node (a2') at (1.5,-2) {$\aout[@][@][][{\msg[x]}]$};
        \node (b1) at (3,0) {$\aout[b][c][][{\msg[l2]}]$};
        \node (b2) at (3,-1) {$\ain[@][@][][{\msg[x]}]$};
        \node (b3) at (3,-2) {$\aout[b][c][][{\msg[l3]}]$};
        \node (b1') at (4.5,0) {$\aout[b][c][][{\msg[r2]}]$};
        \node (b2') at (4.5,-1) {$\ain[@][@][][{\msg[x]}]$};
        \node (b3') at (4.5,-2) {$\aout[b][c][][{\msg[r3]}]$};
        \node (c2) at (6,-1) {$\ain[b][c][][{\msg[l2]}]$};
        \node (c1) at (6,0) {$\ain[a][c][][{\msg[l1]}]$};
        \node (c3) at (6,-2.5) {$\ain[@][C][][{\msg[l3]}]$};
        \node (c2') at (7.5,-1) {$\ain[b][c][][{\msg[r2]}]$};
        \node (c1') at (7.5,0) {$\ain[a][c][][{\msg[r1]}]$};
        \node (c3') at (7.5,-2.5) {$\ain[@][C][][{\msg[r3]}]$};
        \path[->,draw] (a1) -- (a2);
        \path[->,draw] (a1') -- (a2');
        \path[->,draw] (b1) -- (b2);
        \path[->,draw,cyan] (b1') -- (b1);
        \path[->,draw] (b2) -- (b3);
        \path[->,draw] (b1') -- (b2');
        \path[->,draw] (b2') -- (b3');
        \path[->,draw] (c1) -- (c2);
        \path[->,draw] (c2) -- (c3);
        \path[->,draw] (c1') -- (c2');
        \path[->,draw] (c2') -- (c3');
        \path[->,draw] (a1) edge [bend left=12] (c1);
        \path[->,draw] (a1') edge [bend left=15] (c1');
        \path[->,draw] (b1) .. controls (4.5,-.7) .. (c2);
        \path[->,draw] (b1') .. controls (5,0) .. (c2');
        \path[->,draw] (b3) .. controls (4.5,-2.5) .. (c3);
        \path[->,draw] (b3') .. controls (5.7,-2) .. (c3');
        \path[->,draw,red] (a2.south east) .. controls (3,-3) .. (b2'.south);
        \path[->,draw,red] (a2') .. controls (2,-1.5) .. (b2);
      \end{tikzpicture}
        \caption{$\apom_{\ref{fig:ex:counter}_b}$}
    \end{subfigure}
    \caption{\label{fig:ex:counter}
A set of pomsets language-equivalent to 
the pomset with red and black dependencies of \cref{fig:ex:2-2},
but explicitly interleaves the events $\aout[\q][C][][{\msg[l2]}]$
and $\aout[\q][C][][{\msg[r2]}]$ (cyan dependencies)}
\end{figure*}

We remark that the conditions of \cref{sec:pomsets} strictly entail
the corresponding ones in \cref{sec:realisability}.
We show a counterexample for \CCP[2] only, since the same reasoning
applies for the other condition.
Consider the set
$\aR_{\ref{fig:ex:2-2}} = \{ \apom_{\ref{fig:ex:2-2}_{red}},
\apom_{\ref{fig:ex:2-2}_{green}} \}$, where 
$\apom_{\ref{fig:ex:2-2}_{red}}$ and
$\apom_{\ref{fig:ex:2-2}_{green}}$ respectively are the pomset with red dependencies
and the pomset with green dependencies of Figure~\ref{fig:ex:2-2}.
Then, $\aR_{\ref{fig:ex:2-2}}$
 satisfies \CCP[2], since it
contains all pomsets that satisfy hypothesis of the closure condition,
therefore by Theorem~\ref{thm:ccp2vscc2} its language satisfies
\CC[2].  Consider the set
$\aR_{\ref{fig:ex:counter}} = \{\apom_{\ref{fig:ex:counter}_a},
\apom_{\ref{fig:ex:counter}_b}, \apom_{\ref{fig:ex:2-2}_{green}} \}$,
where $\apom_{\ref{fig:ex:counter}_a}$ and $\apom_{\ref{fig:ex:counter}_b}$
are the two pomsets of
Figure~\ref{fig:ex:counter}.  Notice that $\apom_{\ref{fig:ex:counter}_a}$
and $\apom_{\ref{fig:ex:counter}_b}$ are equivalent to
$\apom_{\ref{fig:ex:2-2}_{red}}$, with the exception of the dependency
between $\aout[\q][C][][{\msg[l2]}]$ and $\aout[\q][C][][{\msg[r2]}]$.
Since $\apom_{\ref{fig:ex:counter}_a}$ and
$\apom_{\ref{fig:ex:counter}_b}$ have opposite orders between these two
events, the union of their languages is equal to the language of
$\apom_{\ref{fig:ex:2-2}_{red}}$.  Therefore the language of
$\aR_{\ref{fig:ex:counter}}$ is equal to the language of
$\aR_{\ref{fig:ex:2-2}}$, hence it also satisfies \CC[2].  However,
$\aR_{\ref{fig:ex:counter}}$ does not satisfy \CCP[2]. In fact, the
pomset $\apom_{\ref{fig:ex:2-2}_{red}}$ satisfies hypothesis of
\CCP[2], but there is not pomset in $\aR_{\ref{fig:ex:counter}}$ that
is more permissive than $\apom_{\ref{fig:ex:2-2}_{red}}$.

%%% Local Variables:
%%% mode: latex
%%% TeX-master: "main"
%%% End:

\section{Related work}
\label{sec:related}
The surge of \emph{message-passing} applications in industry is
revamping the interest for software engineering methodologies
supporting designers and developers called to realise
communication-centred software.
In this context, realisability of global specifications is of concern
for both practical and theoretical reasons.
Our approach can support choreography languages (e.g. the global
graphs used in~\cite{gt18} that allow multi-threaded participants and
complex distributed choices).
These specifications yield at the same time ($i$) concrete support to
scenario-based development, ($ii$) rigorous semantics in terms of
partial order of communication events that enable the use of
algorithms and tools to reason about and verify communicating
applications, and ($iii$) a simple graphical syntax that supports the
intuition and makes it easy to practitioners to master the
specification without needing to delve into the underlying theory.

A paradigmatic class of such formalisms are \emph{message-sequence
  charts} (MSCs)~\cite{itu11,gb13,mp05,hm03,gmptw01,ahptb96}.
A mechanism to statically detect realisability in MSCs is proposed
in~\cite{ben1997syntactic}.
The notions of non-local choices and of termination considered
in~\cite{ben1997syntactic} are less than than our verification conditions 
since intra-participant concurrency is not allowed and termination awareness
(\cref{def:termination}) is not enforced.
In the context of choreographies, several works
(e.g.,~\cite{bmt14,chy07,honda16jacm}) defined constraints to
guarantee the soundness of the projections of global specifications.
These approaches address the problem for specific languages, thus
these conditions often use information on the syntactical structure of
the specification.
Instead, conditions presented in \cref{sec:pomsets}
 are syntax-oblivious and they make
minimal assumptions on the communication model.
Therefore, our results can be applied to a wide range of languages.

The closure conditions reviewed in \cref{sec:realisability} have been
initially introduced in \cite{aey03} to study realisability of MSC.
The replacement in the framework of MSC with pomsets is technically
straightforward and yields more general results, since it enables
multi-threaded participants.  In \cref{sec:realisability}, to avoid
systems where participants can get stuck due to the termination of
some partners, we introduce the notion of termination soundness and
demonstrate sufficient conditions that guarantee it.  Then, we
introduce new verification conditions for the distributed
realisability of pomsets, which can tame the combinatorial explosion
due to the interleaving of communication events.

A problem related to realisability is satisfiability of logical formulae.
Model checkers use temporal logic, i.e. LTL, to formalize system
specifications. A general problem that must be faced is
that formal specifications can be wrong as their implementations.
For instance, if a formula is unsatisfiable,
then the specification is probably incorrect.
Similarly to realisability, the problem of satisfiability of a temporal
formula~\cite{rozier2007ltl} allows to demonstrate that there exists an
implementation that meets the specification.

%%% Local Variables:
%%% mode: latex
%%% TeX-master: "main"
%%% End:

\section{Concluding remarks}
\label{sec:conc}

There are some open questions to address.
Pomset semantics of recursive processes is infinite, which precludes
to directly use these results for global specifications that have
loops.
In~\cite{bc88a} pomsets were used in combination with proved
transition systems to give an non-interleaving semantics of CCS;
basically, given a sequence of transitions
$p \xrightarrow{\alpha_1} \cdots \xrightarrow{\alpha_n} q$ between two CCS
processes $p$ and $q$, a pomset $\apom$ can be derived from a
\emph{proved transition system} so that $\apom$ represents the
equivalence class of traces between $p$ and $q$ \quo{compatible} with
traces labelled $\alpha_1, \ldots, \alpha_n$.
This work can help us to generalise our results to infinite computations.

Realisability of high-level MSCs has been addressed in~\cite{loh02},
but the verification conditions are not syntax-oblivious.
The conditions of~\cref{sec:pomsets} are sufficient but not necessary
conditions for realisability.
This is due to the fact that the same semantics (i.e., set of traces)
can be expressed using different sets of pomsets by exploring different
interleavings.
We do not know if a notion of normal forms for families of pomsets can
be used to guarantee that our conditions are necessary.
We conjecture that our semantics could be applied to other
coordination paradigms such as order-preserving asynchronous
message-passing (as the original semantics of CFSMs), synchronous
communications, or tuple based coordination.
We leave the exploration of the robustness of our framework
as future work.
Finally, we plan to extend ChorGram~\cite{chorgram}, a tool we are
currently developing, to implement our theoretical framework and apply
it to the analysis of global specifications.

%%% Local Variables:
%%% mode: latex
%%% TeX-master: "main"
%%% End:

\bibliographystyle{eptcs}
\bibliography{bib}

\end{document}